\documentclass[twoside]{article}

\usepackage[accepted]{aistats2026}
\makeatletter
\def\@copyrightspace{%
  \@float{copyrightbox}[b]%
  \noindent\rule{0.8in}{0.4pt}\par
  \vskip 3pt
  \ifnum\statePaper=1
    {\small\noindent$^{*}$Correspondence to: \texttt{thandina@caltech.edu}.\par}%
    \vskip 2pt
  \fi
  {\small\noindent\Notice@String\par}%
  \vskip 2pt
  \end@float}
\makeatother

\usepackage[utf8]{inputenc} 
\usepackage[T1]{fontenc}    
\usepackage{hyperref}       
\usepackage{url}            
\usepackage{booktabs}       
\usepackage{amsfonts}       
\usepackage{nicefrac}       
\usepackage{microtype}      
\usepackage{xcolor}         
\usepackage{microtype}
\usepackage{graphicx}
\usepackage{subcaption}
\usepackage{enumitem}
\usepackage{algpseudocode}
\usepackage{booktabs}
\usepackage[ruled,vlined]{algorithm2e}
\usepackage{amsmath}

\usepackage{amsmath,amsfonts,bm}
\usepackage{enumitem}

\def\eqref#1{equation~\ref{#1}}

\def\1{\bm{1}}

\DeclareMathAlphabet{\mathsfit}{\encodingdefault}{\sfdefault}{m}{sl}
\SetMathAlphabet{\mathsfit}{bold}{\encodingdefault}{\sfdefault}{bx}{n}

\newcommand{\E}{\mathbb{E}}

\newcommand{\poly}{\mathtt{poly}}

\DeclareMathOperator*{\argmax}{arg\,max}

\usepackage{url}
\usepackage[round]{natbib}
\usepackage{hyperref}

\usepackage{amsmath}
\usepackage{amssymb}
\usepackage{mathtools}
\usepackage{amsthm}

\usepackage[capitalize,noabbrev]{cleveref}

 \theoremstyle{plain}
\newtheorem{theorem}{Theorem}[section]
\newtheorem{proposition}[theorem]{Proposition}

\newtheorem{corollary}[theorem]{Corollary}

\theoremstyle{definition}
\newtheorem{definition}[theorem]{Definition}
\newtheorem{assumption}[theorem]{Assumption}
\theoremstyle{remark}

\usepackage[textsize=tiny]{todonotes}

\author{
  Tinashe Handina \\
  Department of Computing + Mathematical Sciences\\
  California Institute of Technology\\
  Pasadena, CA 91125 \\
  \texttt{thandina@caltech.edu} \\
  \And
  Tongxin Li \\
  School of Data Science \\
  The Chinese University of Hong Kong \\
  Shenzhen, China\\ 
  \texttt{litongxin@cuhk.edu.cn} \\
  \And
  Kishan Panaganti \\
  Department of Computing + Mathematical Sciences\\
  California Institute of Technology\\
  Pasadena, CA 91125 \\
  \texttt{kbp@caltech.edu} \\ 
  \And
  Eric Mazumdar \\
  Department of Computing + Mathematical Sciences\\
  California Institute of Technology\\
  Pasadena, CA 91125 \\
  \texttt{mazumdar@caltech.edu} \\
  \And
  Adam Wierman \\
  Department of Computing + Mathematical Sciences\\
  California Institute of Technology\\
  Pasadena, CA 91125 \\
  \texttt{adamw@caltech.edu} \\
}

\begin{document}
\twocolumn[

\aistatstitle{Leveraging Machine-Learned Advice in Strategic Interactions with No-Regret Learners}

\aistatsauthor{ Tinashe Handina$^{1, *}$\And Tongxin Li$^2$ \And  Kishan Panaganti$^1$ \AND Eric Mazumdar$^1$ \And Adam Wierman$^1$ }

\aistatsaddress{ $^1$California Institute of Technology
\And  $^2$The Chinese University of Hong Kong } ]
\begin{abstract}
  We study how an agent in a two-player repeated game can effectively utilize potentially imperfect advice when interacting with a no-regret learner. We characterize the advice landscape by introducing a pseudo-metric to quantify the usefulness of an advice instance.  We demonstrate the pseudo-metric's applicability through two forms of advice: simulators and payoff matrix predictions. We then show how an optimizing player, equipped with correctness guarantees on the advice, could leverage simulators to compute approximate Stackelberg strategies more efficiently, reducing the interaction complexity traditionally required and illustrating the power of good advice. Finally, we extend our analysis to settings where the advice does not have any guarantee of correctness. We find that, in general, a player cannot simultaneously guarantee near Stackelberg performance when the advice is approximately accurate and a no-regret condition when the advice is inaccurate. We do show, however, that it is possible for an advice-aided player to weakly dominate their utility in some (coarse)-correlated equilibria.
\end{abstract}

\section{INTRODUCTION}
\label{sec: introduction}

Learning in games, a fundamental area within game theory and artificial intelligence, explores how players adapt their strategies through interactions to achieve their objectives~\citep{fudenberg2009learning, leyton2008essentials}.  This field has seen significant developments, with models like fictitious play and no-regret learning providing frameworks for understanding strategic adaptation. No-regret algorithms, such as multiplicative weights update or follow the perturbed leader, guarantee that a player's average payoff over $T$ time steps will be asymptotically no worse than that of the best single fixed action chosen in hindsight (minimizing external regret)~\citep{arora2005multiplicative, cesa2006prediction}.  
Stronger guarantees, like minimizing internal or swap regret, compare performance against policies that swap actions. Algorithms minimizing swap regret are particularly significant as their empirical play frequencies converge to the set of correlated equilibria when employed by all players~\citep{blum2007external, hart2000simple}. The advent of online learning and deep reinforcement learning has further amplified these concepts, enabling sophisticated applications such as human-computer game AIs where agents dynamically adapt to complex, non-stationary environments~\citep{ma2024large, zhang2024guide}.




Traditional analysis of learning in games, particularly in repeated games, often assumes players rely solely on their own observations and historical data, as seen in classical adaptive strategies~\citep{freund1999adaptive}. However, the rise of AI has shifted this paradigm, placing external advice such as predictions or recommendations from machine learning models, expert systems, or other agents, at the forefront of strategic decision-making~\citep{silver2016mastering}. This oversight is significant, as the integration of Machine Learning (ML)-driven advice naturally introduces a departure from the self-contained learning paradigms traditionally studied, necessitating a re-evaluation of how strategies are formed and optimized when agents can leverage advice to infer opponents’ behaviors or payoff structures.

In modern strategic interactions, particularly those involving AI agents, the role of external `advice' as insights into an opponent’s strategy or the game environment, is increasingly critical. For example, in strategic games, opponent modeling techniques~\citep{yu2022model,yu2025llm} enable advanced AI systems to predict opponent moves and analyze game states. Similarly, in interactive puzzle solving, AI-driven systems generate adaptive hints that guide users through challenging tasks more efficiently~\citep{mozikov2024eai}. Furthermore, for market dynamics, ML methods  provide crucial advisory functions to inform optimal bidding or pricing strategies~\citep{deng2024autobidder}, extending traditional analyses of high-stakes scenarios like spectrum auctions~\citep{cramton2013spectrum}. Advice can also aid in the discovery of optimal strategies. For instance, in repeated games, Stackelberg strategies have been found to be optimal up to $o(T)$ additive factors against no-swap-regret learners \citep{Deng2019StrategizingAN} and have thus been a desirable strategy even when facing no-regret learners\citep{zhang2025learningsteerlearnersgames}. 
However, against certain no-regret algorithms (e.g., mean-based algorithms), an optimizer requires an exponential number of interactions (with respect to the size of its action space), to determine an $\varepsilon$-approximate Stackelberg strategy~\citep{brown2023learninggamesgoodlearners}. This complexity arises because the optimizer must extensively explore the learner’s behavior to devise an effective strategy. Intuitively, advice about the learner’s tendencies or decision-making process could reduce this exploration burden, yet whether it can improve interaction complexity remains unresolved. Moreover, advice can be pivotal in scenarios where the payoff matrix is not fully known. In games with incomplete information, predictions about the payoff matrix can significantly influence strategy selection, as demonstrated in~\citep{fudenberg2019predicting}, where accurate payoff predictions led to more effective strategies. 

Recent pioneering analysis has investigated Pareto optimality in normal-form games and stochastic Bayesian games given strategy beliefs~\citep{lisafe}; however, the incorporation of ML-generated advice into repeated games remains largely underexplored, primarily due to challenges in ensuring its quality and robustness. Inaccuracies may stem from computational limitations in modeling dynamic environments~\citep{de2018multi}, incomplete or noisy training data, lack of model interpretability that obscures the basis of advice~\citep{xie2024can}, emotional LLM prompts~\citep{mozikov2024eai},  and, in adversarial settings, the risk of deliberately misleading inputs. These issues highlight the need for understanding the role of advice in repeated games. Thus, this paper addresses the question: 

\begin{center}
\textit{How can an optimizer play a repeated game against a no-regret learner when provided with ML advice that may contain errors?}
\end{center}

We answer this by systematically exploring the role of advice in learning within repeated games, focusing on technical advancements that enhance strategic decision-making under uncertainty. Building on the legacy of algorithmic game theory, we adapt it to an AI-driven era where external inputs redefine optimal strategies. 


\textbf{Contributions}:
First, to characterize the quality of advice, we introduce a \textit{pseudo-metric} (\Cref{def:advice_quality}) measuring the discrepancy between advice and the learner's true behavior or payoffs. 
We demonstrate that access to high-quality advice enables efficient computation of approximate Stackelberg strategies. In particular, in~\Cref{thm:efficient_computation} we show how in a large set of games with sufficiently accurate advice and against classes of no-regret learners, the optimizer requires only a polynomial number of interactions $\poly(|\mathcal{A}|,|\mathcal{B}|,1/\varepsilon)$ where $\mathcal{A}$ and $\mathcal{B}$ are discrete action sets of the optimizer and learner respectively to identify the $\varepsilon$-approximate Stackelberg strategy, compared to the exponential number needed without such guidance \citep{brown2023learninggamesgoodlearners}.


As the accuracy of the advice is, in general, not guaranteed, the optimizer faces a challenge: leveraging potentially flawed advice while maintaining performance. Focusing on payoff matrix predictions, we explore whether the optimizer can achieve both near-Stackelberg performance when advice is accurate (above some correctness threshold) and simultaneously maintain a standard no-regret property when advice is inaccurate (below some correctness threshold).
We find that this `best of both worlds' dual guarantee is generally unattainable (see~\Cref{thm:impossibility}). Despite this impossibility result, we establish that an optimizer can leverage an advice-augmented algorithm that is able to strictly dominate their utility under some (coarse)-correlated equilibria for some learner instances, whilst guaranteeing weak domination of their utility under some (coarse)-correlated equilibria in general (see~\Cref{thm:no-regret}). 


\subsection{Related Work}
\textbf{Learning Types and Information in Games:} Learning and trying to utilize information about an unknown player dates back to concepts of type based methodologies linked to Bayesian games. Leveraging types to model uncertainty over players can be traced to works in the late 1960s \citep{harsanyi1967games}, \citep{harsanyi1968games}. Subsequent pioneering work \citep{milgrom1991adaptive} included ideas on how to learn and adapt beliefs in a bid to better understand how a player can make use of these beliefs. However, a range of works have since demonstrated the challenges that players face when attempting to make correct predictions whilst aiming for optimal play \citep{nachbar2005beliefs, foster2001impossibility, dekel2004learning}. The theoretical foundations in these works have since been applied to many real world domains. In particular, \citep{southey2012bayes} leveraged type based methods in poker where players' hands are partially hidden. Recent work \citep{balcan2025structured, EshwarLearningtoPlay25} extends learning types to structured games or learning to play against opponents when drawn from distributions. While these works assume some structure with respect to the opponent, our learner receives arbitrary, potentially untrustworthy advice about the opponent, with no a priori guarantees. This advice need not correspond to any distribution over opponents, any model of payoff uncertainty, nor any coherent generative assumption. 

\textbf{Online Decision Making with Predictions:} The study of advice-augmented algorithms has, over the years, grown in prominence given the rapid proliferation and adoption of machine learning algorithms across a wide range of domains. \citep{MahdianOLP} sought to understand how to best design advice-augmented algorithms for problems such as online ad allocation and load balancing. Spurring the increased development and deployment of advice-augmented algorithms is their particular salience in a wide range of practical domains such as wind farm optimization \citep{elkin_witherspoon_wind_energy}. These algorithms' reach has extended to settings like competitive caching \citep{pmlr-v80-lykouris18a}, energy scheduling \citep{lee_energy}, network flow allocation \citep{lavastida_et_al:LIPIcs.ESA.2021.59}, smoothed online optimization \citep{RuttenSOPO}, and games \citep{lisafe}. A key design characteristic of this setting is that the advice typically has no guarantees on correctness and is assumed to come from some black-box. The main goal in this setting is to find algorithms that effectively leverage advice. This effectiveness is often characterized by achieving an optimal trade-off between attaining near-optimal performance when the advice is correct (\textit{consistency}) and having some guarantees on performance when the advice is incorrect (\textit{robustness}). 

\textbf{No-regret Learning in Games:}
There is a long history of work in game theory seeking to understand the characteristics of no-regret play. Prominently, there has been work devoted to understanding the convergence of players with different no-regret guarantees in a wide range of game contexts \citep{pmlr-v32-krichene14, leme2024convergence}. Building on this, there has also been work investigating trajectory dynamics of no-regret play with comparison to rewards obtained under Nash equilibria\citep{Katrina_beating_Nash_NR}. Particularly related to this paper, there has been a line of work trying to understand how a player may optimize their reward when in a game with an opponent satisfying some no-regret guarantees \citep{Selling_NR_buyer, pmlr-v178-mansour22a, Deng2019StrategizingAN, brown2023learninggamesgoodlearners}. Prior work has shown how it is impossible to steer no-regret learners to Stackelberg equilibria without additional information on their algorithm or objectives \citep{zhang2025learningsteerlearnersgames}. Related work has also explored how providing advice or recommendations to agents can influence equilibrium outcomes \citep{balcan2009improved, balcan2013circumventing}. While these papers consider the problem of designing helpful advice and convincing agents to follow it, we build upon this work by examining how an agent provided advice with no guarantees can incorporate it to optimize against no-regret learners.

\vspace{-5pt}

\section{PRELIMINARIES}
\label{sec: preliminaries}

\vspace{-5pt}
We consider a two-player repeated game defined by a bimatrix $G = (\mathbf{A}, \mathbf{B}) $, where Player A (the optimizer) has a finite action set $\mathcal{A} = \{ a_1, \ldots, a_N \}$ and Player B (the learner) has a finite action set $\mathcal{B} = \{ b_1, \ldots, b_M \}$. Each player knows their own utility matrix ($\mathbf{A}$ for Player A, $\mathbf{B}$ for Player B) but not their opponent's. The game proceeds over a finite time horizon $t \in [T] $, with players simultaneously selecting mixed strategies at each step: $\alpha_t \in \Delta(\mathcal{A})$ for Player A and $\beta_t \in \Delta(\mathcal{B})$ for Player B. The expected utilities for Players A and B are defined as:
\(
u_A(\alpha_t, \beta_t) = \sum_{i=1}^N \sum_{j=1}^M \alpha_{t,i} \beta_{t,j} \mathbf{A}_{i,j} \quad \text{and} \quad u_B(\alpha_t, \beta_t) = \sum_{i=1}^N \sum_{j=1}^M \alpha_{t,i} \beta_{t,j} \mathbf{B}_{i,j} ,
\)
where \(\alpha_{t,i}\) and \(\beta_{t,j}\) represent the probabilities assigned to actions \(a_i\) and \(b_j\) at time \(t\), respectively.


In this work, we consider the interactions of two players, $A$ and $B$ who we refer to as the optimizer and learner respectively.

\textbf{Learner} (Player B):
The learner employs a learning algorithm that satisfies some \textit{no-regret} property. We consider two specific regret notions: external and swap-regret. We provide definitions in the appendix.
We note that no-regret algorithms exist and have been implemented for multiple online applications \citep{arora2005multiplicative}. No-swap-regret is a stronger condition, and algorithms that attain this property can be constructed from no-regret algorithms~\citep{cesa2006prediction}.

\textbf{Optimizer} (Player A): The optimizer aims to maximize their utility by best-responding to the learner's (Player B) adaptive behavior, without being constrained to a no-regret strategy. To facilitate this, the optimizer receives \textit{advice} about the learner's algorithm, providing insights into the learner's decision-making process.

A challenge the optimizer faces is how they can leverage advice to efficiently compute a strategy approximating the Stackelberg response. The Stackelberg strategy is the optimal response to a best responding opponent.The optimizer's Stackelberg response is utility-optimal against no-swap-regret learners and serves as a robust performance measure in applications like security games, though it may not be optimal against no-external-regret learners. Computing this strategy can be computationally expensive, especially against certain no-regret algorithms \citep{brown2023learninggamesgoodlearners}. The advice mitigates this complexity by offering a proxy for the learner's behavior, enabling efficient strategy computation.

We note that the optimizer's Stackelberg strategy is defined as:
\( \alpha^*:= \argmax_{\alpha \in \Delta(\mathcal{A})} u_A(\alpha, BR_B(\alpha)) \) whereby: \(
BR_B(\alpha) := \argmax_{\beta\in \Delta(\mathcal{B})}u_B(\alpha, \beta).\)

\subsection{Forms of Advice}


We consider two primary forms of advice that the optimizer may receive, each tailored to different game contexts:

\textbf{Simulator Advice:} A simulator $\mathcal{P}_B = \{ p_B^t: \mathcal{H}_t \rightarrow \Delta(\mathcal{B}) \}_{t=1}^T$, where $\mathcal{H}_t$ is the set of all possible histories up to time $t$. A history $H_t = (\alpha_1, \beta_1, \ldots, \alpha_t, \beta_t)$ comprises the mixed strategies played so far, and $p_B^t(H_t)$ predicts the learner's next mixed strategy $\beta_{t+1}$.

This form is valuable in complex games (e.g., large-scale spectrum auctions) where explicitly modeling the learner's payoffs is impractical. By simulating the learner's responses based on history, it aids the optimizer in approximating the Stackelberg strategy without requiring a full payoff matrix.

\textbf{Payoff Matrix Prediction:} A predicted payoff matrix \(\mathbf{B}_{{pred}}\) approximating the learner's true utility matrix \(\mathbf{B}\).
This advice allows the optimizer to compute a Stackelberg strategy as if \(\mathbf{B}_{{pred}}\) were accurate. It is natural in settings where the game's payoff structure can be explicitly represented, simplifying the optimization process.

In our exposition, we consider games for which there exists a unique Stackelberg strategy for the optimizer. Furthermore, we restrict ourselves to the subset of games for which it is feasible--under other learning algorithms--to efficiently determine an approximate Stackelberg strategy.  

\begin{assumption}
\label{BR_assumption}
Fix a game \( G = (\mathbf{A, B}) \) and value of $\varepsilon$,  we have that  
\begin{itemize}[nosep,leftmargin=*]
    \item There exists a unique Stackelberg strategy, $\alpha^\star \in \Delta(\mathcal{A})$. 
    \item $\mathbb{P}_{\alpha \sim \text{Unif}(\Delta(\mathcal{A}))} \left[ b_i \in BR_B(\alpha) \right] \in \{0\} \cup \left[1/\poly(\varepsilon^{-1}), 1\right]$. This is to say  the volume of each \emph{Best Response region} is either 0 or inverse polynomially large.
\end{itemize}
\end{assumption}


\vspace{-3pt}


\section{THEORETICAL RESULTS}
\label{sec: theory}
\vspace{-5pt}
In this section, we provide a framework for evaluating the quality of advice within repeated games. Evaluating the quality of advice in the repeated game setting presents unique challenges as not only does the type or form of advice matter, but also, the learning algorithm deployed by the opponent affects how useful any instance of advice is. 
\vspace{-3pt}
\subsection{Quality of Advice}
We establish a benchmark for advice mechanisms that can be applied to any opponent leveraging a learning algorithm that is either no-external-regret or no-swap-regret. To do this, we consider what the Stackelberg strategy induced by any advice instance is. This is to say, if we are to assume that the advice instance perfectly characterizes the opponent, what would the Stackelberg strategy be? We then define the pseudo-distance between any two advice instances to be the difference in utility, assuming the other player best responds, of the suggested Stackelberg strategies. This characterization allows us to develop a pseudo-metric over the space of advice for any particular advice instance. We can then assess how good any advice instance is by its distance, under the pseudo-metric, from the true realization of the learner. The elegance of this is that this metric generalizes to many other advice formats beyond the ones we concretely evaluate. Furthermore, making use of the Stackelberg strategy is well motivated as for opponents leveraging no-swap-regret algorithms, playing the Stackelberg strategy is utility-optimal up to $o(T)$ additive factors \citep{Deng2019StrategizingAN}. We describe the advice pseudo-metric and instantiate it within the two advice frameworks we provide above as an illustration.

\begin{definition}
    Fix a game $G = (\mathbf{A, B})$ and an advice instance ($\mathbf{B}_{pred}$ or $\mathcal{P}_B$). We define $\mathcal{S}(\mathbf{B}_{pred})$ or $\mathcal{S}(\mathcal{P}_B)$ as the Stackelberg strategy for the optimizer assuming that the advice ($\mathbf{B}_{pred}$ or $\mathcal{P}_B$) perfectly characterizes the learner.
\end{definition}
We now move on to describe the advice-pseudo distance, $\Gamma$, which will allow us to evaluate the quality of any advice. Here we define $\Gamma$ with respect to the predicted payoff matrix model of advice but note that this description extends to the simulator model of advice as well.
\begin{definition}
\label{def:advice_quality}
    Fix a game $G = (\mathbf{A}, \mathbf{B})$. Let $\mathbf{B}_{pred}$ and $\mathbf{B}'_{pred}$ be two advice instances, with $\mathcal{S}(\mathbf{B}_{pred})$ and  $\mathcal{S}(\mathbf{B}'_{pred}) \in \Delta(\mathcal{A})$ being the corresponding optimal Stackelberg strategies. We define the \emph{advice-pseudo-distance} between $\mathbf{B}_{pred}$ and $\mathbf{B}'_{pred}$ to be:
    \begin{align*}    \Gamma(\mathbf{B}_{pred}, \mathbf{B}'_{pred} ) := \big| &u_A(\mathcal{S}(\mathbf{B}_{pred}), BR_B(\mathcal{S}(\mathbf{B}_{pred}))) \\
    &- u_A(\mathcal{S}(\mathbf{B}'_{pred}), BR_B(\mathcal{S}(\mathbf{B}'_{pred}))) \big|.
    \end{align*}
\end{definition}

\begin{proposition}
    \label{prop:pseudo_metric}
    The advice-pseudo-distance $\Gamma$ for any game instance $G$ is a pseudo-metric.
\end{proposition}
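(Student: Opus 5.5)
The plan is to write $\Gamma$ as the absolute difference of a real-valued function evaluated at its two arguments. Every pseudo-metric axiom then follows from the corresponding property of the absolute value on $\mathbb{R}$.

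Fix the game $G=(\mathbf{A},\mathbf{B})$. For an advice instance $\mathbf{B}_{pred}$, define
\[
f(\mathbf{B}_{pred}) := u_A\big(\mathcal{S}(\mathbf{B}_{pred}), BR_B(\mathcal{S}(\mathbf{B}_{pred}))\big),
\]
so that $\Gamma(\mathbf{B}_{pred},\mathbf{B}'_{pred}) = |f(\mathbf{B}_{pred}) - f(\mathbf{B}'_{pred})|$.

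The main care point is checking that $f$ is well defined and finite.
\begin{itemize}
\item $\mathcal{S}(\cdot)$ must be a single strategy. If the Stackelberg strategy induced by the advice is not unique, we fix a deterministic selection rule, e.g.\ a fixed tie-breaking, in the spirit of the uniqueness clause of \Cref{BR_assumption}.
\item $BR_B(\cdot)$ may be set-valued. We evaluate it with the true matrix $\mathbf{B}$ and adopt the standard Stackelberg convention that the learner breaks ties in the optimizer's favour, so $u_A$ at this pair is a single number. Equivalently, we take the max of $u_A$ over $\beta \in BR_B(\cdot)$. This maximum exists because $BR_B(\alpha)$ is a nonempty compact face of the simplex and $u_A$ is continuous.
\item $f$ is finite, since $|u_A| \le \max_{i,j}|\mathbf{A}_{i,j}|$.
\end{itemize}
Under whichever convention is used, it must be the same for every argument, so that $f$ is a genuine function of the advice instance alone. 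The same construction applies verbatim to simulator advice $\mathcal{P}_B$.

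With $f$ in hand, the axioms follow directly.
\begin{itemize}
\item \textbf{Nonnegativity.} $\Gamma \ge 0$ because it is an absolute value.
\item \textbf{Zero on the diagonal.} $\Gamma(\mathbf{B}_{pred},\mathbf{B}_{pred}) = |f(\mathbf{B}_{pred}) - f(\mathbf{B}_{pred})| = 0$.
\item \textbf{Symmetry.} $|x-y| = |y-x|$.
\item \textbf{Triangle inequality.} For any third instance $\mathbf{B}''_{pred}$,
\[
|f(\mathbf{B}_{pred}) - f(\mathbf{B}'_{pred})| \le |f(\mathbf{B}_{pred}) - f(\mathbf{B}''_{pred})| + |f(\mathbf{B}''_{pred}) - f(\mathbf{B}'_{pred})|.
\]
\end{itemize}

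Finally, $\Gamma$ is only a pseudo-metric and not a metric. Distinct advice instances can give $\Gamma = 0$, for example any two matrices inducing the same Stackelberg strategy, such as $\mathbf{B}_{pred}$ and a positive rescaling $c\,\mathbf{B}_{pred}$. A short example of this kind shows that the identity of indiscernibles fails in general, which justifies the term ``pseudo''.
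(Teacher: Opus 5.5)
Your proposal is correct and follows essentially the paper's route. The paper also checks zero on the diagonal, symmetry, and the triangle inequality by treating the values $u_A(\mathcal{S}(\cdot), BR_B(\mathcal{S}(\cdot)))$ as points on the real line. Your additions are sensible extras the paper omits: the well-definedness care (a fixed selection for $\mathcal{S}$ and a tie-breaking convention for $BR_B$), and the rescaling example showing why $\Gamma$ is only a pseudo-metric.
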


We prove the above proposition in the appendix. The advice-pseudo-distance allows us to characterize what `good' advice is. For any advice instance, the advice-pseudo-distance with respect to the true realization of the learner defines how good that instance is. The larger the value of the advice-pseudo-distance, the worse the advice. The smaller the value of the advice-pseudo-distance, the better the advice.

\subsection{Efficient Stackelberg Computations via Good Advice}

In this section, we illustrate and motivate the use of advice in strategic settings by showing how having access to good simulators leads to more efficient computations of Stackelberg strategies \emph{even when considering the number of interactions with the simulator itself}. Previous work has shown that against some classes of no-regret algorithms (e.g., mean-based algorithms), it is necessarily the case that the optimizer would need to have an exponential number of interactions with the learner in order to determine an $\varepsilon$-approximate Stackelberg strategy \citep{brown2023learninggamesgoodlearners}. Good simulators can aid us in closing the gap in efficiency in determining Stackelberg strategies, serving as a good motivator for their use. 

We expound on important regret bound definitions here as they are useful in the exposition of the power of advice with respect to different types of learners. We note here that algorithms that satisfy no-anytime or no-adaptive regret bounds can be obtained from other base no-external-regret algorithms.

\begin{definition}[External and Swap Regret definition]
\begin{itemize}
    \item \textit{External Regret:} An algorithm is \textit{no-external-regret (no-regret)} if, for any sequence of actions by Player A, $(\alpha_1, \alpha_2, \ldots, \alpha_T)$, 
    \begin{align*}
    \sum_{t=1}^T u_B(\alpha_t, \beta_t) \geq \max_{b \in \mathcal{B}} \sum_{t=1}^T u_B(\alpha_t, b) - o(T).
    \end{align*}
    \item \textit{Swap Regret:} An algorithm is \textit{no-swap-regret} if, for any sequence of actions by Player A, $(\alpha_1, \alpha_2, \ldots, \alpha_T)$,
    \begin{align*}
        \sum_{t=1}^T u_B(\alpha_t, \beta_t) \geq \max_{\delta: \mathcal{B} \rightarrow \mathcal{B}} \sum_{t=1}^T u_B(\alpha_t, \delta(\beta_t)) - o(T),
    \end{align*}
    where $\delta:\mathcal{B}\rightarrow\mathcal{B}$ is a ``swap'' function mapping actions in $\mathcal{B}$ to itself. 
\end{itemize}
\end{definition}

\begin{definition}[$\phi$-anytime regret algorithms]
\label{anytime_regret_def}
An algorithm is an \emph{$\phi$-no-anytime-regret algorithm} if its regret satisfies
$
\text{Regret}(\tau) = O(\tau^c)
$
over the first \( \tau \) rounds, for some \( c < 1 \) and any \( \tau  \in [\phi, T] \). Here the regret is defined as
\[
\text{Regret}(\tau) \coloneqq  \max_{f \in \mathcal{F}} \sum_{t = 1}^{\tau} \left[ u_B(\alpha_{t, i}, f(\beta_{t, j})) - u_B(\alpha_{t,i}, \beta_{t,j}) \right].
\]
\end{definition}

\begin{definition}[$\phi$-adaptive regret algorithms]
\label{adaptive_regret_def}
An algorithm for player B is  \emph{$\phi$-no-adaptive-regret} if
\[
\sup_{r, s \in [T] : |s - r| \geq \phi} \text{Regret}([r, s]) \leq O(T^c), \quad \text{for some } c < 1,
\]
where
\begin{multline*}
\text{Regret}( [r, s]) \coloneqq \\
\max_{f \in \mathcal{F}}  \sum_{t = r}^{s} \left[ u_B(\alpha_{t,i}, f(\beta_{t,j})) - u_B(\alpha_{t,i}, \beta_{t,j}) \right].
\end{multline*}
\end{definition}

In Definitions \ref{adaptive_regret_def} and \ref{anytime_regret_def}, $\mathcal{F}$ can be the set of swap functions (in the case of swap-regret) or functions that map each action to a single, fixed action (in the case of external-regret).  The parameter $\phi$ can be a constant or a function of $T$ (e.g., $\log T$).


Previous work has identified a gap in the optimizer's ability to identify a Stackelberg strategy against different classes of no-regret learners. While with access to best-response oracles or against no-swap-regret learners the optimizer only needs a polynomial number of interactions to identify an $\varepsilon-$approximate Stackelberg equilibrium strategy, in certain games we see that against some no-anytime-regret learners (e.g., mean-based algorithms such as FTPL) the optimizer requires an exponential number of interactions.

\begin{proposition}[\citep{Letchford2009LearningAA}, \citep{PengLearningStrategiesCommit}]
For a game \( G \) satisfying Assumption \ref{BR_assumption}, there is an algorithm which finds an \( \varepsilon \)-approximate Stackelberg strategy for player \( A \) with  
\[
Q = \poly(|\mathcal{A}|, |\mathcal{B}|, 1/\varepsilon)
\]
queries to \( BR(\alpha) \).
\end{proposition}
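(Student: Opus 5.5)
This proposition is cited from prior work, so the plan follows the Letchford--Conitzer--Munagala / Peng et al.\ approach. The idea is to learn, for every learner action $b_j$ with a nonempty best-response region, enough of that region's geometry to solve a linear program over it.

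The best-response regions $R_j=\{\alpha\in\Delta(\mathcal{A}): b_j\in BR_B(\alpha)\}$ are convex polytopes. Each is cut out by at most $M-1$ halfspaces of the form $\alpha^\top(\mathbf{B}_{\cdot j}-\mathbf{B}_{\cdot k})\ge 0$, together with the simplex constraints. The Stackelberg value can therefore be written as
\[
\max_j \max_{\alpha\in R_j} u_A(\alpha,b_j),
\]
which is $M$ linear programs. We do not know $\mathbf{B}$, so the hyperplanes are unknown, but each one can be recovered to precision $\varepsilon$ from $BR$ queries.

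\textbf{Step 1: find interior points.} Sample $\alpha$ uniformly from $\Delta(\mathcal{A})$ and query $BR(\alpha)$. By the second bullet of Assumption~\ref{BR_assumption}, every region of positive volume has probability at least $1/\poly(\varepsilon^{-1})$. Hence $\poly(1/\varepsilon)\cdot\log M$ samples hit every such region with high probability. Zero-volume regions can be ignored: a Stackelberg strategy attaining an optimal value is approximated by a point in some full-dimensional region, up to $\varepsilon$ by continuity of $u_A$.

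\textbf{Step 2: learn the boundaries.} For each found region $j$, with interior point $\alpha_j$, recover its facets.
\begin{itemize}
\item Binary search along segments from $\alpha_j$ to other sample points, or to the simplex vertices, to locate boundary points to precision $\varepsilon/\poly(N)$. Each search costs $O(\log(N/\varepsilon))$ queries.
\item Collect $N$ affinely independent boundary points on a common facet, separating $b_j$ from a given $b_k$ (identified by the $BR$ answer just across the boundary). From these, solve for the hyperplane.
\end{itemize}
This gives at most $M$ facets per region with $\poly(N,\log(1/\varepsilon))$ queries each.

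\textbf{Step 3: optimize.} Solve the LP $\max_{\alpha\in\hat R_j} u_A(\alpha,b_j)$ over a slightly shrunk approximate polytope $\hat R_j$, so that the solution lies truly inside $R_j$, and output the best $j$. The loss is bounded by the Lipschitz constant of $u_A$, which is at most $\max|\mathbf{A}_{ij}|$, times the geometric error.

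\textbf{Main obstacle.} The hard part is Step 2's error control. Boundary points are known only approximately, and the hyperplane recovered from nearly degenerate point sets can be badly conditioned. Shrinking $\hat R_j$ must still leave a point within $\varepsilon$ of the true optimum.

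My first attempt would follow \citet{PengLearningStrategiesCommit}. Instead of fitting facets explicitly, run an ellipsoid or cutting-plane method with a separation oracle built from $BR$ queries. Alternatively, use the volume lower bound to guarantee each region contains a ball of radius $1/\poly(\varepsilon^{-1})$. That ball bounds the condition numbers and yields polynomial precision requirements, so the overall query count $Q=\poly(N,M,1/\varepsilon)$.

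Ties in $BR$ occur on measure-zero sets and are handled by the queries' tie-breaking.
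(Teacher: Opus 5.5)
The paper gives no proof of this proposition; it is imported from \citet{Letchford2009LearningAA} and \citet{PengLearningStrategiesCommit}, so those works are the only reference point. Your plan follows the template of \citet{Letchford2009LearningAA}: discover the positive-volume regions by sampling, learn their boundaries from $BR$ queries, then solve one LP per region. That route is right, but two steps of your version do not go through as written.

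\textbf{Zero-volume regions.} You cannot dismiss these by continuity of $u_A$. The map $u_A(\cdot,b)$ is continuous for a fixed $b$, but the follower's response jumps. Take $\mathbf{B}=\begin{bmatrix}1&0&1/2\\0&1&1/2\end{bmatrix}$ and $\mathbf{A}=\begin{bmatrix}0&0&1\\0&0&1\end{bmatrix}$. Action $b_3$ is a best response only at $\alpha=(1/2,1/2)$. Assumption~\ref{BR_assumption} holds: the region probabilities are $1/2,1/2,0$, and the Stackelberg strategy is unique under leader-favourable ties. Here $V=1$, yet every strategy in a full-dimensional region earns $0$. So the benchmark you can actually reach is $\max_{j:\,\mathrm{vol}(R_j)>0}\max_{\alpha\in R_j}u_A(\alpha,b_j)$. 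That has to be built into the meaning of ``$\varepsilon$-approximate'' (or assumed); it cannot be derived.

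\textbf{Step~2.} This is where the proof lives, and neither mechanism you name suffices.

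Segments from $\alpha_j$ toward sample points or vertices exit $R_j$ through facets you do not control. Nothing therefore yields $N$ well-spread, affinely independent points on every facet the LP needs. A working facet-learner must either search locally around one boundary point kept away from all other facets, or iterate: query the current LP optimum and, whenever the answer is some $b_k\neq b_j$, learn the $j$/$k$ facet there. Either way, it is the spread of the points relative to the binary-search precision that controls the error in the recovered normal.

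For the ellipsoid route, a $BR$ query is only a \emph{membership} oracle for $R_j$. The answer $b_k$ tells you which constraint is violated but not its coefficients, so you have no separation oracle. The missing ingredient is the Gr\"otschel--Lov\'asz--Schrijver (Yudin--Nemirovski) theorem: weak linear optimization over a convex body with known centre $a_0$ and inner radius $r$ needs only $\poly(N,\log(1/r),\log(1/\varepsilon))$ weak-membership queries. Both inputs are available:
\begin{itemize}
\item \emph{Inner radius.} By Brunn's principle, a convex $R_j$ of relative volume $p$ in the $(N-1)$-simplex has width at least $p\,w(\Delta)/(N-1)$. Hence its inradius satisfies $r\ge p/O(N^2)$. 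This is $1/\poly(N,1/\varepsilon)$ rather than your $1/\poly(1/\varepsilon)$, but it is enough.
\item \emph{Centre.} Averaging $\poly(N)$ of the uniform samples that land in $R_j$ gives, with high probability, a point $a_0$ with $B(a_0,r/O(N))\subseteq R_j$.
\end{itemize}
Finally, pull the optimizer's weakly feasible output toward $a_0$ so that $b_j$ becomes its unique best response. For oracle precision $\eta$, this costs value proportional to $\eta/r$. Altogether this gives $\poly(N,M,1/\varepsilon)$ queries.
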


\begin{proposition}
[\citep{brown2023learninggamesgoodlearners}]
     Consider the set $S$ of games satisfying Assumption \ref{BR_assumption}. There is an algorithm that finds  
an \( \varepsilon \)-approximate Stackelberg strategy in \( \poly(1/\varepsilon)^Q \) rounds against any no-anytime regret learner for any game in the set $S$. Here:
\[
Q = \poly(|\mathcal{A}|, |\mathcal{B}|, 1/\varepsilon).
\]
Furthermore, there exists a distribution over the set $S$ such that for a sampled Game $G$:
\begin{enumerate}
    \item For any no-swap-regret learner used by the opponent, there is a strategy for the learner which yields on average a reward for the optimizer that is $\varepsilon$ away from the Stackelberg value in $\poly(|\mathcal{A}|/ \varepsilon)$ interactions
    \item There exists a mean based-no-regret algorithm such that when used by the opponent, there is no strategy that yields on average a reward for the optimizer that is $\varepsilon$ away from the Stackelberg value unless they interact for $\exp(\Omega(|\mathcal{A}|))$ rounds with the learner.
\end{enumerate}
\end{proposition}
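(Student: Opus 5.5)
The proof has two parts. The upper bound comes from simulating the best-response oracle of the Letchford and Peng proposition through repeated play. The dichotomy comes from a randomized game family, following \citep{brown2023learninggamesgoodlearners}.

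\textbf{Upper bound.} I will run the query algorithm of the preceding proposition, which makes $Q=\poly(|\mathcal{A}|,|\mathcal{B}|,1/\varepsilon)$ queries, and answer each query $BR(\alpha)$ by interaction.

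1. For the $k$-th query, play the fixed strategy $\alpha$ for a block of length $L_k$.
2. Choose $L_k$ to be a $\poly(1/\varepsilon)$ multiple of the total length of all earlier blocks. The earlier history is then only a $\poly(\varepsilon)$ fraction of the cumulative play.
3. Apply the anytime-regret bound $O(\tau^c)$ at the end of block $k$. Over the whole prefix, the learner's empirical play must put all but a vanishing fraction of its mass on actions that are near-best-responses to the cumulative average of $\alpha$. That average is within $\poly(\varepsilon)$ of $\alpha$.
4. Use the second part of Assumption~\ref{BR_assumption}: every nonempty best-response region has inverse-polynomial volume. So I may perturb each queried $\alpha$ into the interior of a region, with a $\poly(\varepsilon)$ margin, where $BR_B$ is a single action. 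The action played most often in the block is then exactly $BR(\alpha)$.
5. The block lengths grow geometrically with ratio $\poly(1/\varepsilon)$, so the total is $\poly(1/\varepsilon)^Q$ rounds.
6. The returned strategy is $\varepsilon$-approximately Stackelberg by the correctness of the oracle algorithm.

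The delicate step is item 3, converting a regret bound on the prefix into a per-block statement. It goes through because the regret is sublinear while the block itself is a constant fraction of the prefix.

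\textbf{Hard family.} I will draw a secret subset $S\subset\mathcal{A}$ of size about $|\mathcal{A}|/2$ uniformly at random and build $(\mathbf{A},\mathbf{B})$ from it.

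1. The Stackelberg strategy is roughly uniform on $S$. It induces a distinguished learner action $b^\star$ that gives the optimizer high utility.
2. Outside a small neighbourhood of that point, the learner best-responds with actions that give the optimizer low utility. These actions reveal essentially nothing about $S$; at most they reveal $O(1)$ bits per distinct best-response change.
3. The volume of each best-response region must still respect Assumption~\ref{BR_assumption}.

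\textbf{Part 1 (no-swap-regret learners).} Here the optimizer can instead run a direct exploration scheme on this family. No-swap-regret means that, aggregated over rounds where the learner plays $b$, the learner's play is a best response to the average $\alpha$ on those rounds. Probing a structured set of $\poly(|\mathcal{A}|/\varepsilon)$ strategies therefore identifies $S$ coordinatewise. The optimizer then commits to the recovered strategy and earns within $\varepsilon$ of the Stackelberg value on average.

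\textbf{Part 2 (mean-based learner), the main obstacle.} I will take a mean-based algorithm, such as follow-the-perturbed-leader, whose play depends only on the cumulative utility vector. The goal is to show that the optimizer's rewards are near Stackelberg only while the cumulative average of its play stays $\varepsilon$-close to the hidden strategy.

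The key claims are:
1. Steering the cumulative average into that small neighbourhood without knowing $S$ is a search over $\binom{|\mathcal{A}|}{|\mathcal{A}|/2}$ possibilities.
2. Each $\poly$-length stretch of play yields only $O(1)$ bits of feedback, because a mean-based learner's response changes only when cumulative leaders switch.
3. Moving the cumulative average far enough to test a new hypothesis takes time proportional to the elapsed horizon.

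An information-theoretic (Fano-type) argument over the random $S$ then gives a lower bound of $\exp(\Omega(|\mathcal{A}|))$ rounds. I expect the hardest part to be designing $\mathbf{B}$ so that the non-Stackelberg regions truly leak only $O(1)$ bits while still satisfying Assumption~\ref{BR_assumption}. I would first try a $\mathbf{B}$ in which the learner's utility depends on $|\alpha(S)|$ only through a threshold.
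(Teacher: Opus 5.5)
The paper gives no proof of this proposition; it is quoted from Brown et al. So your sketch has to stand on its own.

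\textbf{Upper bound.} This half is essentially right. Geometrically growing blocks make each block all but a $\poly(\varepsilon)$ fraction of the prefix. The anytime-regret bound against the fixed comparator $BR(\alpha)$ then pushes the learner's mass in the block onto $BR(\alpha)$, and the total is $\poly(1/\varepsilon)^Q$ rounds. The soft spot is step 4. Assumption~\ref{BR_assumption} bounds region volumes from below, not the utility margin at the specific points the Letchford--Peng procedure queries. Its boundary searches deliberately approach region boundaries, where your block argument gives no reliable answer. Moving a query also changes what the oracle algorithm asked. You need a version of that algorithm that stays correct when answers near boundaries are only approximate best responses.

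\textbf{Part 2 (mean-based lower bound).} The genuine gap is the dichotomy: no game family is constructed, and the claims you rely on fail as stated. Your claim 1 contradicts Assumption~\ref{BR_assumption}. For steering into the right $b^\star$-region to be a needle-in-a-haystack search over $\binom{N}{N/2}$ games, those regions must be essentially disjoint. Their volumes would then sum to at most $1$, so a typical one has volume at most $\binom{N}{N/2}^{-1}=2^{-\Omega(N)}$. Yet every sampled game must have volume at least $1/\poly(1/\varepsilon)$, and at that volume $\poly(1/\varepsilon)$ uniformly random probes already land in the true region.

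The counting argument is also too weak. Under the assumption, $\poly(|\mathcal{A}|,|\mathcal{B}|,1/\varepsilon)$ best-response queries suffice, so only polynomially many informative events are ever needed. With $\Theta(N)$ bits to learn, your claim 2 (``$O(1)$ bits per $\poly$-length stretch'') gives only $\Omega(N)$ stretches, which is a polynomial lower bound. The exponential must come entirely from the cost of each event, and you would need two things:
\begin{enumerate}
\item a construction in which $\Omega(N)$ informative events must occur sequentially, at cumulative averages a constant distance apart, so that each costs a constant-factor increase in elapsed time against the chosen mean-based learner;
\item an argument that the optimizer cannot extract many bits per such displacement, for example from the exact rounds at which the learner switches.
\end{enumerate}
Your threshold idea does not provide this. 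To meet the volume condition, the threshold must sit where random $\alpha$ fall on both sides with non-negligible probability. The optimizer can then park the cumulative average at the threshold and test coordinates with small displacements. Whether each test is expensive then depends entirely on how your learner behaves inside its mean-based indifference band, which you never specify.

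\textbf{Part 1 (no-swap-regret learners).} This part has an independent flaw. No-swap regret is a global, conditional-average statement. The learner may answer probe $\alpha_k$ with some $b\neq BR(\alpha_k)$, as long as $b$ stays a near-best response to the average of all $\alpha_t$ over the rounds where it plays $b$. The $o(T)$ slack also lets it answer early probes arbitrarily. So ``probing a structured set identifies $S$ coordinatewise'' does not follow. The polynomial guarantee against every no-swap-regret learner has to be engineered into the specific game, which is exactly the missing construction.
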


 Our first result shows that access to a good simulator closes this gap. In particular, if the simulator of the anytime no-regret learner is able to yield an approximate Stackelberg strategy, the advice-augmented optimizer is also able to identify an approximate strategy with a polynomial number of interactions even when considering the number of interactions they have with the simulator.

\begin{assumption}
\label{strength_pred}
    For a game $G = (\mathbf{A, B})$ satisfying Assumption \ref{BR_assumption}. Let $\mathcal{P}_B$ be such that:
    \begin{enumerate} 
        \item $\mathcal{P}_B$ is $\poly(1/ \varepsilon)$-no-anytime-regret.
        \item $\Gamma(\mathcal{P}_B, \mathbf{B}) \leq \delta$.
    \end{enumerate}
\end{assumption}

\begin{theorem}
\label{thm:efficient_computation}
    Let $\mathcal{P_B}$ be a prediction of the learning algorithm being utilized by player $B$ that satisfies Assumption \ref{strength_pred}. We then have that in $\poly(|\mathcal{A}|, |\mathcal{B}|, 1/\varepsilon)$  number of interactions with both the predictor and player $B$, player $A$ is able to find a strategy which yields on average a reward for the optimizer that is $\delta + \varepsilon$ away from the Stackelberg value. 
\end{theorem}

The above result, while interesting in that it shows the power of having a simulator, has some strong conditions on the quality of the advice. The condition of being no-regret, coupled with the error bounds, implies that the simulator in the long run is going to be accurately best responding to most queries. The assumption of being $\poly(1/ \varepsilon)$-anytime-no-regret, on the other hand, is an assumption that is easily satisfied by a wide range of no-regret learning algorithms including mean-based algorithms. Therefore, the result above demonstrates that with a sufficiently `good' or `accurate' simulator the optimizer can require fewer interactions to identify a Stackelberg strategy.

\subsection{Consistent No-Regret Advice-Augmented Algorithms}
Moving beyond the identification of a Stackelberg strategy, we consider how the optimizer may leverage advice when provided no guarantees of correctness. A major motivation in the development of advice-augmented algorithms is to design algorithms that retain a notion of `robustness' yet can take advantage of advice. In other words, these algorithms have the `best of both worlds' in the sense that they are able to be robust in instances when the advice is incorrect yet in the cases that the advice is correct, take advantage of the benefit of having the advice. We investigate the extent to which this is possible in repeated games. Most robust approaches in the repeated game setting have focused on developing learning algorithms that have some `no-regret' notion. We can thus think of `no-regret' notions as forms of robustness. 
\subsubsection{Motivating Example: Advice in 2 $\times$2 Games}
Before we present general results for leveraging advice in the repeated game setting we provide motivation from a repeated $2 \times 2$ game interaction. In this particular setting, we show how the optimizer can leverage an advice-augmented algorithm to ensure that they are close to the Stackelberg equilibrium in instances that the advice is close to the truth yet retain a no-regret property in instances the advice is not. Consider the following game $G$ as specified by the following payoff matrices:
\[
\begin{array}{ccc}
\text{Player A} & \text{Player B} & \text{Advice} \\
\mathbf{A} = \begin{bmatrix}
0 & 0 \\
1 & -10
\end{bmatrix}
&
\mathbf{B} = \begin{bmatrix}
3 & -5 \\
1 & -1
\end{bmatrix}
&
\mathbf{B}_{pred} = \begin{bmatrix}
2.5 & -4 \\
1 & -0.5
\end{bmatrix}
\end{array}
\]
In this game, the Stackelberg strategy for Player $A$ is $
\begin{bmatrix}
\frac{2}{3} \\
\frac{1}{3}
\end{bmatrix}
$, which induces a best response of $\begin{bmatrix}
1 \\
0
\end{bmatrix}$
 by the learner. The prediction $\mathbf{B}_{pred}$ results in a predicted Stackelberg strategy of $
\begin{bmatrix}
0.7 \\
0.3
\end{bmatrix}$. Note that finding the Stackelberg strategy for Player $A$ in this game reduces to finding a value of $p \in [0, 1]$ as the optimizer's Stackelberg problem is defined over the one dimensional probability simplex. It indeed is possible to design advice-augmented algorithms that achieve a `best of both worlds' guarantee for this game and more generally, repeated $2 \times 2$ games against no-regret learners. In particular, the optimizer can implement a verify and check threshold based approach to decide whether they should leverage the advice or not. Concretely, the optimizer could check if $\begin{bmatrix}
0.7 + \frac{1}{2}\delta \\
0.3 - \frac{1}{2}\delta
\end{bmatrix}$ and $\begin{bmatrix}
0.7 - \frac{1}{2}\delta\\
0.3 + \frac{1}{2}\delta
\end{bmatrix}$ result in different best responses from a no-regret learner. This would mean that the Stackelberg strategy is at most $\delta$ away from the advice's predicted Stackelberg strategy. This therefore allows them to decide if the advice's Stackelberg strategy is sufficiently accurate. We formalize this example through the following proposition.
 \begin{definition}
    We define a $\delta$-approximate Stackelberg strategy
    \[ \alpha^*_{\delta} \in \argmax_{\alpha \in \Delta(\mathcal{A})}u_A(\alpha, BR_B(\alpha)) \text{ such that } \|\alpha^\star- \alpha\|_1 \leq \delta, \] where $\alpha^\star$ is the true Stackelberg strategy. We denote the corresponding utility at $\alpha^*_{\delta}$ the $\delta$-approximate Stackelberg value $\kappa(\delta)$.
 \end{definition}
 \begin{proposition}
 \label{proposition_2x2}
      Let $V$ be the true Stackelberg value and $\mathbf{B}_{pred}$ be a prediction of the payoff matrix $\mathbf{B}$ in a repeated $2 \times 2$ game. For any fixed threshold value of $\delta$, there exists an advice-augmented algorithm $ALG$ such that:
     \begin{itemize}
         \item If $\Gamma(\mathbf{B, B}_{pred}) < V - \kappa(\delta)$ then $\sum\limits_{t = 1}^{T} u_A(\alpha_t, \beta_t) \geq  \kappa(\delta)T - o(T)$;
         \item if $\Gamma(\mathbf{B, B}_{pred}) \geq V - \kappa(\delta)$ then  ALG is no-regret,
     \end{itemize}
     against a $\phi$-no-adaptive regret learner when $\phi \in o(T)$
 \end{proposition}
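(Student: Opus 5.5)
The algorithm will be a verify-then-commit scheme built on the threshold test described in the motivating example. In a $2\times 2$ game a mixed strategy of the optimizer is a single number $p\in[0,1]$, the probability of $a_1$. Let $\hat p$ be the advice's Stackelberg point $\mathcal{S}(\mathbf{B}_{pred})$. The learner's best response is piecewise constant in $p$, switching at most once at an indifference point $p^\circ$. Unless the game is degenerate, the true Stackelberg strategy $\alpha^\star$ sits at $p^\circ$, shaded by an arbitrarily small margin toward the side that induces the optimizer-preferred response; the degenerate case is a pure strategy and can be handled separately.

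\emph{Phase 1 (test).} Play $p_- = \hat p - \delta/2$ for a block of length $L=o(T)$ with $L\ge \phi$, then play $p_+ = \hat p + \delta/2$ for another block of length $L$. Because the learner is $\phi$-no-adaptive-regret with $\phi\in o(T)$, in the latter portion of each block the learner's empirical play must be concentrated on the best response to that fixed $p$. Otherwise a fixed deviation would yield regret linear in $L$, contradicting the $O(T^c)$ interval bound once $L\gg T^c$ (choose $L=T^{c'}$ with $\max(c,\cdot)<c'<1$). This lets the optimizer read off $BR(p_-)$ and $BR(p_+)$ up to $o(T)$ error, and the total cost of the phase is $o(T)$ utility.

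\emph{Phase 2 (branch).} If the responses differ, then $p^\circ\in[p_-,p_+]$, so the true Stackelberg point is within $\ell_1$ distance $\delta$ of $\hat p$. In that case the optimizer commits for the remaining rounds to the point $p^\circ$, shaded toward the preferred side. It can locate $p^\circ$ by bisection inside the interval using further $o(T)$-length blocks, or simply play the best of the two tested endpoints, which already lies within $\delta$ of $\alpha^\star$. Again by adaptive regret, the learner best responds on all but $o(T)$ rounds, giving utility at least $\kappa(\delta)T-o(T)$. If the responses agree, the optimizer discards the advice and runs a standard no-regret algorithm, for instance Hedge, on the remaining rounds. The $o(T)$ spent in Phase 1 does not affect the no-regret property.

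Next I must connect the branches to $\Gamma$. When $\Gamma(\mathbf{B},\mathbf{B}_{pred})<V-\kappa(\delta)$, the value obtained by playing $\hat p$ against the true best response exceeds $\kappa(\delta)$, the best value attainable within the $\delta$-ball. In $2\times2$ games this forces $\hat p$ to lie within the $\delta$-window around $p^\circ$: outside it, the response region containing $\hat p$ yields at most the value at its boundary, which in the $\delta$-ball sense is bounded by $\kappa(\delta)$. The test therefore detects a switch and we get the consistency guarantee. When $\Gamma\ge V-\kappa(\delta)$, either the test fails, and we run no-regret, or it passes. Passing still gives $\kappa(\delta)T-o(T)$, which dominates the no-regret benchmark, since committing near a Stackelberg point against a no-regret learner is at least as good as the best fixed action against the learner's empirical play, up to $o(T)$. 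To make this airtight I would instead run the commit strategy as one "expert" alongside the fixed actions inside a meta no-regret algorithm.

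The main obstacle is this last monotonicity step: showing that $\Gamma<V-\kappa(\delta)$ really implies that the switch point lies in $[\hat p-\delta/2,\hat p+\delta/2]$. This requires a careful case analysis of the $2\times2$ best-response geometry, including ties and the case where $\kappa$ is flat in $\delta$. I would attempt it by writing $u_A(p,BR(p))$ explicitly as two linear pieces and comparing endpoints.
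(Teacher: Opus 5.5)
Your algorithm is essentially the paper's. The paper also queries $\hat p\pm\delta/2$ (and $\hat p$ itself) in $o(T)$-length blocks, commits if the responses differ, and otherwise runs multiplicative weights. The step you single out is the one the paper asserts in one line: that $\Gamma(\mathbf{B},\mathbf{B}_{pred})<V-\kappa(\delta)$ forces $|\hat p-p^\star|<\delta/2$, and implicitly that the test fails when the advice is bad. Your justifications for it do not work.

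The test only detects whether $p^\star\in[\hat p-\delta/2,\hat p+\delta/2]$. The $\Gamma$-condition is instead the superlevel condition $f(\hat p)>\kappa(\delta)$ for $f(p)=u_A(p,BR_B(p))$. Since $f$ jumps at $p^\star$ and is affine on each best-response region, it can increase away from $p^\star$. So your claim that the region containing $\hat p$ yields at most its boundary value is false. Take $\mathbf{A}=\begin{bmatrix}0&0\\10&4\end{bmatrix}$ and $\mathbf{B}=\begin{bmatrix}1&0\\0&1\end{bmatrix}$. The learner plays $b_1$ iff $p\ge 1/2$, so $p^\star=1/2$ and $V=5$, while $f(p)=4(1-p)$ on $p<1/2$. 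The prediction $\mathbf{B}_{pred}=\begin{bmatrix}0.9&0\\0&0.1\end{bmatrix}$ has $\hat p=0.1$ and $\Gamma=5-3.6=1.4$. Reading $\kappa(\delta)$ as the worst value over the $\ell_1$-ball gives $\kappa(\delta)=2$ for $\delta\le 0.2$. Then $\Gamma<V-\kappa(\delta)$, yet both test points lie in the $b_2$ region. Under your own reading ("best value in the ball"), $\kappa(\delta)=V$ because $\alpha^\star$ lies in the ball, so the first bullet is vacuous.

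Your handling of the robustness branch contains a separate error. Committing near $\alpha^\star$ is not at least as good as the best fixed action against the learner's empirical play. If the learner mostly plays $b^\star$, your external regret is about $T\bigl(\max_a u_A(a,b^\star)-u_A(\tilde\alpha,b^\star)\bigr)$. This is linear unless $\alpha^\star$ is itself a best response to $b^\star$. In the paper's own example, $(2/3,1/3)$ earns $1/3$ against $b_1$ while $a_2$ earns $1$.

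So the second bullet genuinely needs the converse: the test cannot pass when $\Gamma\ge V-\kappa(\delta)$. This also fails in general. In the game above, a prediction switching at $0.45$ gives $\hat p=0.45$, and the test with $\delta=0.2$ passes. Yet $f(\hat p)=2.2$, so $\Gamma=2.8>1=V-\kappa(\delta)$ when $\kappa(\delta)$ is read as $f(p^\star+\delta/2)=4$. Your algorithm then commits and suffers linear regret. The paper glosses over this case too.

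The meta-expert patch cannot repair it. A no-regret meta-algorithm is scored against the realized $\beta_t$ and drifts toward $BR_A(b^\star)$. Against a no-regret learner its average payoff is therefore bounded by coarse correlated equilibrium values. Here the unique one, $(a_2,b_2)$, pays $4<V$, so the patch forfeits the first bullet whenever $\kappa(\delta)$ exceeds every such value.

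Also, when the test passes, the best tested endpoint is only within $\ell_1$-distance $2\delta$ (not $\delta$) of $\alpha^\star$. Only your bisection variant is compatible with $\kappa(\delta)$.

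A correct argument must first pin down what $\kappa(\delta)$ means. It must then restrict the class of games so that the test outcome and the $\Gamma$-threshold coincide. As stated, neither your argument nor the paper's establishes either direction of that equivalence.
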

 
 \subsubsection{Impossibility of General Robust and Consistent Advice-Augmented Algorithms}
 While the previous result showcases some potential promise when it comes to designing advice-augmented algorithms in repeated games, we ask whether advice-augmented algorithms with best of both world guarantees can extend to more general settings. We therefore extend our investigation to the case of general no-external-regret / no-swap-regret learners and consider games beyond the $2 \times 2$ setting. We find that in the general case, an optimizer can not guarantee that for all instances of games and all no-regret learners, they will be able to make use of an advice-augmented algorithm that achieves an approximation of the Stackelberg value when the advice is approximately accurate and guarantee no-regret robustness when the advice is inaccurate. 
\begin{theorem}
\label{thm:impossibility}
    There exists a repeated game $G$ which satisfies Assumption \ref{BR_assumption} such that there does NOT exist an advice-augmented algorithm $ALG$ for which for any fixed threshold value of $\delta$:
\begin{itemize}
    \item If $\Gamma(\mathbf{B, B}_{pred}) < V - \kappa(\delta)$ then $\sum\limits_{t = 1}^{T} u_A(\alpha_t, \beta_t) \geq  \kappa(\delta)T - o(T)$;
    \item if $\Gamma(\mathbf{B, B}_{pred}) \geq V - \kappa(\delta)$ then  ALG is no-regret,
     \end{itemize}
against any no-regret or no-swap-regret learner.
\end{theorem}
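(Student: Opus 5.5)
We will prove the impossibility by exhibiting a single game $G$ together with two learner instances whose histories are identical for a long initial stretch. One instance will force robustness and the other will force consistency, and the two demands will be incompatible. Concretely, we take a game with $|\mathcal{A}| = 2$ or $3$ and a learner payoff matrix $\mathbf{B}$ such that the Stackelberg strategy $\alpha^\star$ lies on the boundary between two best-response regions. The optimizer's Stackelberg value $V$ should be strictly larger than the optimizer's payoff at every point that is a best response to some fixed learner action. Then playing near $\alpha^\star$ is strictly dominated in the stage game, i.e.\ it incurs linear external regret whenever the learner settles on the Stackelberg best response. We check Assumption~\ref{BR_assumption} directly, since each best-response region is a polytope of constant volume. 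We then fix an advice instance $\mathbf{B}_{pred}$ with $\Gamma(\mathbf{B},\mathbf{B}_{pred})=0$, for instance $\mathbf{B}_{pred}=\mathbf{B}$.

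Next we introduce a second matrix $\mathbf{B}'$ that agrees with $\mathbf{B}$ on all rows the optimizer must use to approach $\alpha^\star$, but for which $\Gamma(\mathbf{B}',\mathbf{B}_{pred}) \geq V-\kappa(\delta)$. For example, $\mathbf{B}'$ can shift the best-response boundary so that $\mathcal{S}(\mathbf{B}_{pred})$ induces a different learner response giving the optimizer much less. The key step is to build no-regret (and no-swap-regret) learners $L$ for $\mathbf{B}$ and $L'$ for $\mathbf{B}'$ whose action sequences coincide against every optimizer strategy for the first $T-o(T)$ rounds, or at least along the trajectory ALG generates. The standard device is to let both learners play the action that is a best response under \emph{both} matrices whenever the optimizer's cumulative empirical play lies in the region where $\mathbf{B}$ and $\mathbf{B}'$ agree. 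They deviate only with an $o(T)$ lag, which is allowed because the regret budget is $o(T)$.

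With these in hand the argument splits into two cases. Consistency against $L$ (the $\Gamma=0$ case) forces ALG to play near $\alpha^\star$ for $T-o(T)$ rounds in order to earn $\kappa(\delta)T-o(T)$. Since $L'$ produces the same observations, ALG plays the same way against $L'$. Against $L'$ this play yields the optimizer linear external regret: some fixed action $a$ attains a payoff larger by a constant against the empirical distribution of $L'$'s play. Hence ALG is not no-regret in the case $\Gamma(\mathbf{B}',\mathbf{B}_{pred})\geq V-\kappa(\delta)$, which is a contradiction. We will also note that, because $\Gamma$ is a pseudo-metric (Proposition~\ref{prop:pseudo_metric}) and depends only on optimizer utilities, the optimizer cannot distinguish the two cases from the advice alone.

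The main obstacle is designing $L'$ so that it is genuinely no-swap-regret with respect to $\mathbf{B}'$ while remaining observationally indistinguishable from $L$ on ALG's trajectory. The tension is that the optimizer may probe adaptively, as in the verify-and-check approach of Proposition~\ref{proposition_2x2}. We will therefore need at least three optimizer actions, so that the probes which distinguish $\mathbf{B}$ from $\mathbf{B}'$ are themselves costly, and the learners need only respond to a mixture once it has been played $\omega(o(T))$ times. Our first attempt will be learners that best-respond to the \emph{cumulative} optimizer history, with $L'$ switching only after the history certifies the $\mathbf{B}'$ distinction. We will then verify that bounded-length probes cannot trigger the switch without sacrificing the consistency payoff.
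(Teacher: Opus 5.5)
There is a genuine gap in your key step. The two requirements you place on $\mathbf{B}'$ are incompatible. With the mechanism you actually give for making $\Gamma(\mathbf{B}',\mathbf{B}_{pred})$ large ($\mathcal{S}(\mathbf{B}_{pred})=\alpha^\star$ receives a different, worse response under $\mathbf{B}'$), the learner $L'$ cannot exist. Consistency makes ALG play near $\alpha^\star$ for $T-o(T)$ rounds and collect about $\kappa(\delta)$ per round. That requires $L$ to answer $b^\star$ on all but $o(T)$ of those rounds. If $L'$ reproduces this sequence, it plays $b^\star$ on $\Theta(T)$ rounds on which, under $\mathbf{B}'$, another action is better by a constant (for a boundary shift of constant size). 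So $L'$ has linear regret with respect to $\mathbf{B}'$. Incidentally, ALG's payoff against $L'$ would then be $\kappa(\delta)T-o(T)$, not ``much less''. The $o(T)$ regret budget buys $o(T)$ rounds of imitation, not $T-o(T)$.

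If instead you take literally that $\mathbf{B}'$ agrees with $\mathbf{B}$ on the rows supporting $\alpha^\star$, the learner's response to $\mathcal{S}(\mathbf{B}_{pred})$ is unchanged. Then $\Gamma(\mathbf{B}',\mathbf{B}_{pred})$ can be large only if $\mathbf{B}'$ gives the optimizer a much higher Stackelberg value on other rows. That is a different construction, with the contradiction coming from $\alpha^\star\notin BR_A(b^\star)$. You have not set it up, and it would still need the ingredient below to stop ALG from probing those other rows.

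The paper's proof runs the imitation in the opposite direction and keeps it sublinear. The correct-advice learner (true matrix $\mathbf{B}$) follows Algorithm~\ref{Alg_ref1}. Once the optimizer has entered $R^\star$ $c$ times, it answers $b'$ until the number of visits to $R^\star$ reaches $g(T)\in o(T)\cap\omega(\log T)$. It then reverts to a no-swap-regret algorithm, which costs only $o(T)$ extra regret. Under a matrix $\mathbf{B}'$ with $BR_{\mathbf{B}'}(\alpha)=b'$ on $R^\star$ and $BR_A(b')\notin R^\star$, those same $b'$ answers are genuine best responses. The optimizer must therefore decide on sublinear evidence. Spending $o(T)$ rounds in $R^\star$ violates consistency, while a non-sublinear number violates no-regret in the $\mathbf{B}'$ world.

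What makes this bite against adaptive probing is quantifier order. The learner's imitation length is chosen after ALG (``long enough'' for ALG to give up), exceeding the necessarily sublinear number of $R^\star$ rounds ALG will spend against a persistent $b'$. Your proposed remedies do not supply this:
\begin{itemize}
\item Any $o(T)$-length probe is absorbed by both the $\kappa(\delta)T-o(T)$ guarantee and the $o(T)$-regret guarantee. So making probes costly per round, or adding a third action, does not help.
\item A learner that waits a linear number of plays before responding to a mixture is not no-regret.
\item Best-responding to the cumulative history is follow-the-leader, which is not no-regret either.
\end{itemize}
Finally, your game condition is impossible as written, since $V=u_A(\alpha^\star,b^\star)\le\max_a u_A(a,b^\star)$. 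The property you actually need is $\alpha^\star\notin BR_A(b^\star)$.
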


This main theorem illustrates how it is impossible to switch between retaining a no-regret property in cases where the advice is incorrect and attaining utility close to the Stackelberg value in cases where the advice is correct. This, strikingly, shows the difficulty of appropriately leveraging advice in the repeated game setting. Driving the impossibility result above is the wide range of no-regret adversaries that exist. In particular, we consider a no-regret adversary that, to whichever no-regret advice-augmented algorithm, steers the no-regret algorithm away from actions whose best response for the learner is a Stackelberg follower strategy.

The result above holds for $\phi = T$ meaning that we are considering the space of very strong adversaries who only have to satisfy a no-regret guarantee over the entire time horizon. In many practical settings, however, strategic opponents aim to have a no-regret guarantee in an adaptive sense. This is to say over windows of time, they hope to be satisfying some no-regret condition. Furthermore, a rough estimate of the extent to which they are generally able to satisfy this condition may be common knowledge to all players. A concrete example where this assumption is plausible (and for which we have a numerical illustration) is the security game (see Section \ref{sec: numerics}). In these instances, players would want to guarantee some no-regret condition even over windows of time. Furthermore, the amount of regret they are willing to incur may be deduced by opponents. As such we consider the setting where the $\phi$ is known by both the learner and the optimizer. Furthermore, let $\varepsilon$ be the maximum regret in any window the learner can incur i.e.,
\(
\sup_{r, s \in [T] : |s - r| \geq \phi} \text{Regret}([r, s]) \leq \varepsilon
\). We assume that both the optimizer and learner are aware of this as well. 

We begin by attempting to prove a version of Theorem \ref{thm:impossibility} with $\phi$-adaptive adversaries whose regret is bounded by $\varepsilon$ in any window. With this relaxation, we ask whether having access to advice allows us to derive any benefit in the case that the advice is correct whilst guaranteeing some no-regret property in the cases that the advice is incorrect. For fixed values of $|\mathcal{A}|, |\mathcal{B}|$, as $T$ grows to infinity this problem in general becomes trivial as the optimizer can simply make use of Stackelberg identification procedures to find the optimal Stackelberg strategy. Thus, we are interested in cases where $T = o(|\mathcal{A}||\mathcal{B}|)$. In these cases, the optimizer cannot make use of Stackelberg identifying procedures. What we find, interestingly, is that guaranteeing the correctness of a prediction of a Stackelberg strategy ends up being the same as finding a Stackelberg strategy in the first place. This seemingly renders the advice redundant in this particular setting further illustrating the robustness of  Theorem \ref{thm:impossibility}. 

\begin{proposition}
\label{prop:same_count}
  Fix a value of $\phi$ and $\delta$. There exists a game $G$ satisfying ~\Cref{BR_assumption} such that any advice-augmented optimizer needs $\Omega(\phi |\mathcal{A}||\mathcal{B}|)$ interactions with a $\phi$-no-adaptive regret learner to confirm if some advice instance yields a $\delta$-approximate Stackelberg strategy.  
\end{proposition}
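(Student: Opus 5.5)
We will prove \Cref{prop:same_count} with a lower-bound construction and an adversary argument. First we build a game in which the Stackelberg strategy sits in a region that cannot be found without testing many best-response regions one by one. Second, we show that verifying a candidate advice strategy already requires those same tests. Third, we show that each test costs about $\phi$ rounds against a $\phi$-no-adaptive regret learner.

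\textbf{The game.} We take a game satisfying \Cref{BR_assumption} in which the learner's best-response regions partition the simplex $\Delta(\mathcal{A})$ into $\Theta(|\mathcal{A}||\mathcal{B}|)$ cells. Each cell has inverse-polynomial volume, and the optimizer's utility is nearly flat across cells. One natural way to get this is to embed $|\mathcal{A}|$ copies of a gadget, each with $|\mathcal{B}|/|\mathcal{A}|$ learner actions, so that the boundaries between regions are hyperplanes indexed by pairs $(a_i,b_j)$.

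For each such hyperplane we consider a perturbed game $G_{ij}$. It changes the learner's payoffs only in column $b_j$, near a single vertex region, so that a new best-response region opens near some point. At that point the optimizer gets utility more than $\delta$ above the original Stackelberg value. Every other query point gets the same best response in $G$ and in all the $G_{ij}$. The uniqueness and volume conditions must be preserved in every perturbed game; we will check this by making each perturbation a small tilt of one column, which gives a region of volume $1/\poly(\varepsilon^{-1})$.

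\textbf{Verification is as hard as search.} Fix the advice instance so that $\mathcal{S}(\mathbf{B}_{pred})$ is the Stackelberg strategy of $G$. This advice is $\delta$-accurate in $G$ but not in any $G_{ij}$. So a verifier must tell $G$ apart from each $G_{ij}$. By construction, the only way to do that is to play a strategy that lands inside the perturbed region of $G_{ij}$ and observe the learner's response. This gives a standard hitting-set adversary argument: as long as some perturbed region has not been probed, the adversary keeps both hypotheses consistent with everything observed. Hence at least $\Omega(|\mathcal{A}||\mathcal{B}|)$ distinct probes are needed.

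\textbf{Cost of a single probe.} Next we argue that each probe costs $\Omega(\phi)$ interactions. A $\phi$-no-adaptive regret learner with window regret at most $\varepsilon$ is only forced to best respond to a fixed $\alpha$ on average over a window of length $\phi$. Over shorter stretches it may play a response that is consistent with both $G$ and $G_{ij}$; for instance, it can keep playing its previous best response and still stay within its regret budget. We formalize this by exhibiting, for both games, a single learner that satisfies the adaptive regret constraint and produces identical play unless the optimizer holds its strategy inside the distinguishing region for $\Omega(\phi)$ rounds. This uses the window-regret bound with $\varepsilon$ and $\phi$ known to both players: a deviation over fewer than $c\phi$ rounds costs the learner at most $\varepsilon$ regret.

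\textbf{Main obstacle.} The hard step is building this learner. We need one lazy or inertial no-adaptive-regret learner that stays valid in every window of length at least $\phi$ in both games at once, even while the optimizer switches probes adaptively. Combining the steps gives $\Omega(\phi)\cdot\Omega(|\mathcal{A}||\mathcal{B}|)$ rounds.
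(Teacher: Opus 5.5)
\textbf{The step that is supposed to produce the factor $|\mathcal{A}|$ fails.} Best-response regions $R_b=\{\alpha\in\Delta(\mathcal{A}):\alpha^\top\mathbf{B}_{\cdot b}\ge\alpha^\top\mathbf{B}_{\cdot b'}\ \forall b'\}$ are convex, and there is one per learner action. So the simplex cannot be partitioned into $\Theta(|\mathcal{A}||\mathcal{B}|)$ best-response cells; your gadget has only $|\mathcal{B}|$ learner actions in total. The perturbations $G_{ij}$ break for the same reason. Changing column $b_j$ changes $u_B(\alpha,b_j)$ linearly in $\alpha$, so $b_j$'s new region is again convex. If $R_j$ is nonempty, the set where the best response changes must therefore touch $R_j$. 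It cannot be an isolated cell near an arbitrary point with every other query point unchanged. Concretely, adding $\Delta>0$ to $\mathbf{B}_{ij}$ raises $b_j$'s payoff by $\Delta\alpha_i$. A single interior probe placed sufficiently close to a facet of $R_j$, just outside it, then becomes a $b_j$-point in all of $G_{1j},\dots,G_{|\mathcal{A}|j}$ at once. Your distinguishing sets overlap, so $O(|\mathcal{B}|)$ probes already rule out every $G_{ij}$. The hitting-set argument cannot give more than that, and the whole $\Omega(|\mathcal{A}||\mathcal{B}|)$ bound rests on having that many pairwise disjoint distinguishing sets. This is a missing idea, not a detail.

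One way to supply it is to hide the alternatives on a learner action $b_0$ that is never a best response in $G$, so there is no old region the new one must stay convex with.
\begin{itemize}
\item In $G_v$, give $b_0$ a strictly positive convex combination of the columns active at a vertex $v$ of $G$'s best-response subdivision, lifted by $h>0$. If $v$ lies on the boundary, add a penalty on the coordinates vanishing at $v$.
\item Give the optimizer payoff above $V+\delta$ against $b_0$.
\item A lifted supporting function exceeds the envelope $\max_b\alpha^\top\mathbf{B}_{\cdot b}$ only near the kink $v$, so distinct vertices give disjoint distinguishing sets.
\item In general position with every action present, there are roughly at least $(|\mathcal{A}|-1)|\mathcal{B}|$ such vertices, by the lower bound theorem for simple polytopes.
\item If, as you require, each $G_v$ satisfies \Cref{BR_assumption}, the volume floor caps the number of disjoint regions at $\poly(1/\varepsilon)$. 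So $\varepsilon$ must be small relative to $|\mathcal{A}||\mathcal{B}|$.
\end{itemize}

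Your ``main obstacle'' is then manageable. The hiding must be budgeted cumulatively rather than per visit, because windows of length at least $\phi$ include $[1,T]$. In $G_v$, let the learner copy $G$'s per-round best response until the optimizer has spent $c\phi$ rounds in total in $v$'s region. Take $h\le\varepsilon/(c\phi)$, rescaling $\mathbf{B}$ so the region stays large. Every window's regret is then at most $c\phi h\le\varepsilon$, and you still pay $\Omega(\phi)$ per region.

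For comparison, the paper never needs disjoint cells. It takes games with $|\mathcal{B}|$ regions and argues that a verifier must locate all $\Omega(|\mathcal{B}|)$ separating hyperplanes. Each hyperplane needs $\Omega(|\mathcal{A}|)$ queries in the $(|\mathcal{A}|-1)$-dimensional simplex, at $\phi$ rounds per query. If some hyperplane is left undetermined, the paper builds $G'$ whose Stackelberg strategy lies on it. A repaired version of your argument would make explicit both the indistinguishability and the per-query cost, which the paper only sketches.
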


The main intuition behind this proposition is that when provided with advice that you can not guarantee its correctness, in some games the best option for the optimizer is to simply engage with the learner in the same way they would have without the advice in order to establish its correctness. 

Motivated by the impossibility results above, we shift our perspective and relax the optimizer's goal when leveraging advice-augmented algorithms. Instead of trying to guarantee near Stackelberg performance, we relax this and ask the question of whether they can simply benefit from the advice when it is better than what they would have gotten from a regret minimizing trajectory. The theorem above showed that there does not exist an advice-augmented algorithm that, for all instances of no-regret adversaries, could guarantee to both be no-regret when the advice is below some accuracy threshold and be close to the Stackelberg value when the advice is above an accuracy threshold. We now instead ask if it is possible to design an advice-augmented algorithm that deploys a strategy that weakly dominates some (coarse)-correlated equilibrium. We note that convergence to (coarse)-correlated equilibrium is a consequence of jointly leveraging no-(swap)-regret algorithms.

In this setting, we assume that the optimizer knows values of $\phi$ and $\varepsilon$ as described above. We note here that they need not know these values exactly for the algorithm to work. It is only important that they are aware and make use of values of $\phi$ and $\varepsilon$ that are at least as large as the true values the learner utilizes.
\begin{theorem}
\label{thm:no-regret}
    There exists an advice-augmented algorithm for the optimizer such that when they engage with a $\phi$-no-adaptive-external regret learner incurring at most $\varepsilon$ regret in each $\phi$ window, the optimizer's total expected utility:
    \begin{enumerate}[label=(\roman*)]
        \item weakly dominates their utility from some $\varepsilon$-approximate coarse correlated equilibrium, and
        \item strictly dominates the utility from a coarse correlated equilibrium for some $\phi$-no-adaptive-external regret learner and advice instances.
    \end{enumerate}
\end{theorem}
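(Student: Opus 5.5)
The plan is to build an explicit ``try-the-advice-then-fall-back'' algorithm and analyze it window by window. The algorithm splits the horizon into blocks of length $\phi$. In the first phase it commits to the advice-induced Stackelberg strategy $\hat\alpha := \mathcal{S}(\mathbf{B}_{pred})$. Because the learner has regret at most $\varepsilon$ on every window of length at least $\phi$, its empirical play $\bar\beta$ over each block of constant $\hat\alpha$ must be an $\varepsilon/\phi$-approximate best response to $\hat\alpha$. The optimizer then observes its realized average utility on the block, $\hat u = u_A(\hat\alpha,\bar\beta)$.

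The algorithm keeps running a standard no-(external)-regret algorithm, say Hedge on the $\mathbf{A}$-payoffs, as a benchmark in its head. It continues playing $\hat\alpha$ only while its cumulative realized utility stays at least as large as a running lower estimate of what the regret-minimizing trajectory would have obtained. That estimate is the value of the minimax/CCE benchmark $\underline{v}$, defined as the worst utility for A over $\varepsilon$-approximate CCEs. As soon as the advice path falls below this threshold, which is detected at the end of some block with at most one block of loss, i.e.\ $O(\phi)$ utility, the algorithm permanently switches to Hedge for the remaining rounds.

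For (i), analyze the two branches separately. If the algorithm never switches, its utility is at least $\underline{v}T - O(\phi)$ by construction of the threshold. If it switches at time $\tau$, the prefix contributes at least $\underline{v}\tau - O(\phi)$. The suffix is a run of a no-regret optimizer against a $\phi$-adaptive learner with window regret at most $\varepsilon$, so the suffix empirical joint distribution is an $(\varepsilon+o(1))$-approximate CCE, and A's suffix utility is at least the utility at that CCE. Concatenating, the overall empirical distribution is a mixture of the prefix play, which by the window guarantee is itself an approximate CCE restricted to the learner's deviations, and the suffix CCE. This mixture is again an $\varepsilon$-approximate CCE, because the CCE constraints are linear and hold separately for each piece. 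A's total utility equals its utility under this mixture, which gives weak domination of some $\varepsilon$-approximate CCE.

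For (ii), I would exhibit a concrete instance: the $2\times 2$ game from the motivating example, with accurate advice and a learner that best-responds up to $\varepsilon$ within each window. The algorithm then never switches and earns about the Stackelberg value $V=1/3$ per round. I would compare this with a CCE of the game in which A earns strictly less, e.g.\ the pure Nash $(a_2,b_2)$ or a CCE supported on it, so A gets $-10$, or the equilibrium reached by Hedge. This gives strict domination.

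The main obstacle is the concatenation step in (i). I need to make sure the prefix play under the fixed $\hat\alpha$ does not violate the learner-side CCE constraints by more than $\varepsilon$ per window, using adaptive regret rather than global regret. I also need to check that the optimizer-side CCE constraints, which require A to be no-regret, hold only approximately on the prefix. Here I would argue that the phrase ``some $\varepsilon$-approximate CCE'' lets us measure domination against the CCE formed by the learner-side constraints, and that the threshold test bounds A's regret on the prefix.
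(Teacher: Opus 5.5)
\textbf{The main gap: your algorithm cannot be run.} Your switching threshold $\underline{v}$ is the worst utility for A over $\varepsilon$-approximate CCEs. It depends on the learner's true matrix $\mathbf{B}$, which the optimizer does not know. Computing it from $\mathbf{B}_{pred}$ fails exactly in the case robustness is meant to cover: if the threshold comes out too low, bad advice runs for the whole horizon. Nothing computable from $\mathbf{A}$ alone works either. If the learner has a strictly dominant action $b$, every approximate CCE gives A about $\max_a \mathbf{A}_{a,b}$, which can exceed A's security value. So advice that merely stays above the security value need not dominate any CCE. Running Hedge ``in your head'' does not supply the benchmark. Hedge's payoff depends on how the learner would have reacted to Hedge's actions, and you never observe that while playing $\hat\alpha$.

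The missing idea is the one the paper's Algorithm~\ref{Alg_nr_dominate} is built around:
\begin{itemize}
\item \emph{actually} play $A_{NR}$ for $\phi$-round phases to measure $R_{\text{NR}}$ on the real trajectory;
\item play the advice for $\phi$ rounds to measure $R_{\text{advice}}$;
\item commit according to the comparison with slack $\phi\varepsilon$;
\item add a confirmation phase of $A_{NR}$, because having tried the advice can move later no-regret play to a different, possibly worse, equilibrium.
\end{itemize}

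\textbf{The concatenation step in (i) is false as written, but you do not need it.} On the prefix you play a fixed $\hat\alpha$, so A's external regret there can be linear in $\tau$. Your threshold test lower-bounds A's utility, not its regret. Keeping only the learner-side constraints changes the notion of CCE in the statement.

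If you had a legitimate threshold $\theta \ge \underline{v}$, your two branches would already give total utility at least $\underline{v}T - O(\phi) - o(T)$ directly. With an empirically measured threshold $v_1 = R_{\text{NR}}/\phi$ from a genuine $A_{NR}$ phase, $v_1$ is itself the value of an approximate CCE, since both players have small regret on that window. The convexity argument then works if you apply it to equilibria rather than to the actual prefix play. Total utility is at least $\tau v_1 + (T-\tau)v_2 - O(\phi)$, the value of a mixture of two approximate CCEs. Your block-by-block monitoring after committing is then a real improvement over the paper's one-shot comparison. A learner with $\varepsilon$ regret per window may drift among near-best responses to $\hat\alpha$ after the commit, and monitoring catches that.

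\textbf{Errors in your witness for (ii).} $(a_2,b_2)$ is not a Nash equilibrium, since A deviates to $a_1$. No CCE can give A $-10$, because $a_1$ secures $0$. The right witness is the mixed equilibrium: A plays $p = 2/3$, the learner plays $b_1$ with probability $10/11$, and A earns $0 < 1/3$. You must also fix a learner that breaks the tie at $p = 2/3$ toward $b_1$; otherwise A gets $-10/3$. This is allowed because (ii) is existential. Finally, $V = 1/3$ only holds if $\mathbf{B}$ is read with learner actions indexing rows. Under the literal convention $u_B = \alpha^\top \mathbf{B}\beta$, $b_1$ is strictly dominant, $V = 1$, and every approximate CCE already gives A about $1$. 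That instance would then show no strict domination.
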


Theorem \ref{thm:no-regret} moves to establish both a version of robustness and consistency. In particular, we have the weak-domination of a (C)CE regardless of the exact form of $\phi$-no-adaptive-external regret learner they encounter. However, the theorem also establishes the possibility of strict domination of a (C)CE  illustrating the potential for leveraging the benefits of having access to good advice. Variations of common no-regret algorithms satisfy the $\phi-$no-adaptive-external regret criterion(e.g., variants of multiplicative weights)\citep{pmlr-v80-zhang18o, hazan_odp}. These would thus be instances of learners for which good advice results in the strict domination of (C)CE. 

We note that while Theorem \ref{thm:no-regret} is written for the case of $\phi$-no-adaptive-external-regret players and coarse-correlated equilibria, we can easily extend the same idea to no-swap-regret players and correlated equilibria (see~\Cref{cor:no-regret-ext}). The main idea behind the proofs is that the optimizer can make use of a `trying' and `checking' approach in order to determine whether or not to make use of the advice. 

We also note that in the proof of ~\Cref{thm:no-regret}, the complexity associated with diverging trajectories of play from having tried to play the advice means that the optimizer has to take into account the possibility of shifts in equilibria when changing back to a regret minimizing approach. 

\vspace{-5pt}

\section{NUMERICAL RESULTS}
\label{sec: numerics}
One particularly important real world application in which repeated interactions and Stackelberg strategies in particular arise is the domain of security games. In this setting two players, an attacker and defender, repeatedly interact over time and it often is the case that the strategy the defender seeks to implement is the Stackelberg strategy. For both the attacker and the defender, uncertainty of the other's true risk appetite and resources often leads to them not having a concrete understanding of the opponent's payoff matrix. Increasingly machine learning is being leveraged in these domains in order to guide defenders in the identification of optimal strategies.

\begin{figure}[htbp]
  \centering
  \begin{subfigure}[t]{0.45\textwidth}
    \centering
    \includegraphics[width=\textwidth]{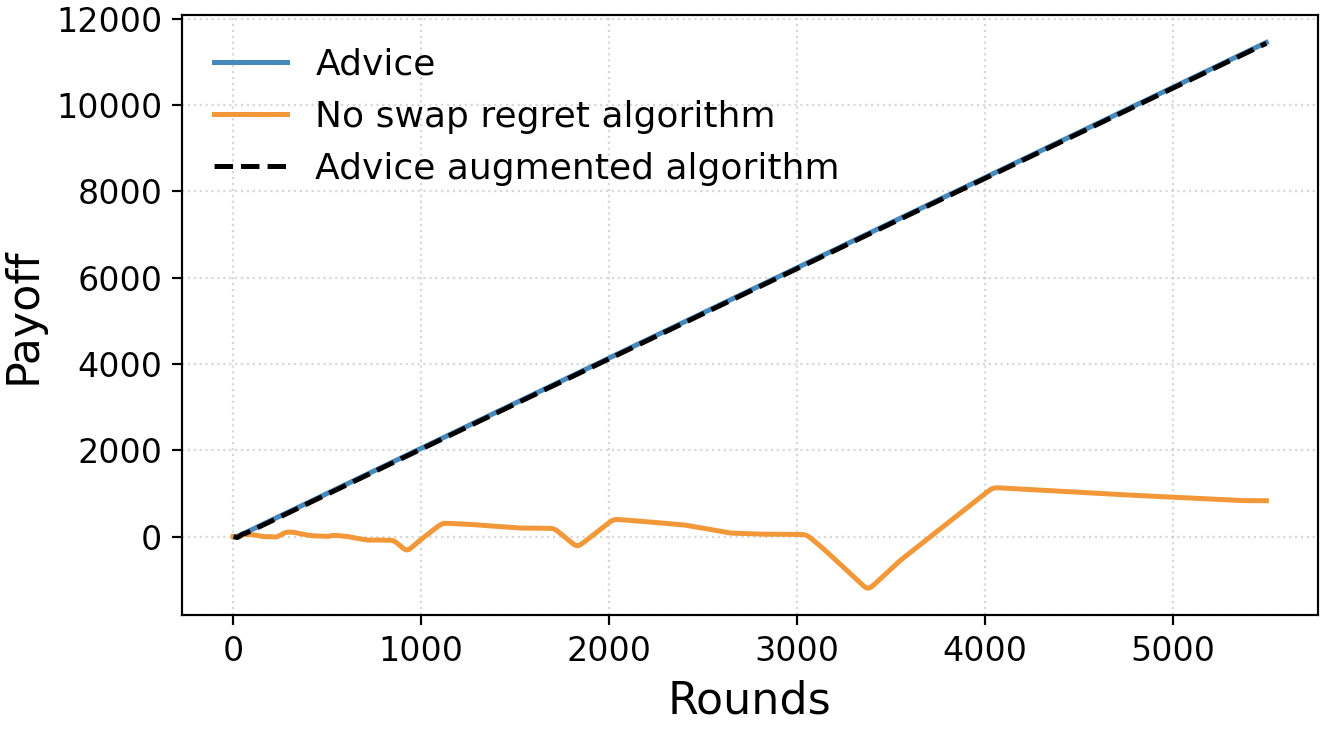}
    \caption{Optimizer cumulative payoff given true payoff matrix as advice}
  \end{subfigure}
  \hfill
  \begin{subfigure}[t]{0.45\textwidth}
    \centering
    \includegraphics[width=\textwidth]{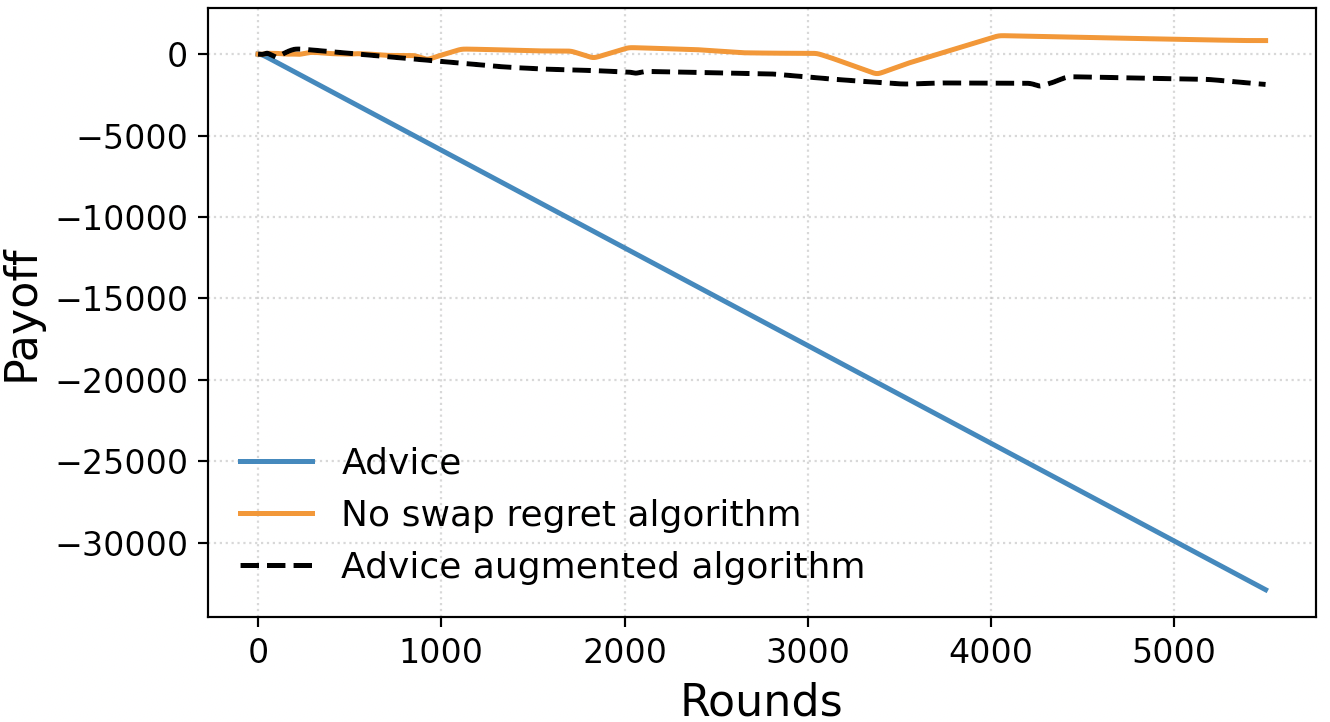}
    \caption{Optimizer cumulative payoff given incorrect payoff matrix prediction as advice}
  \end{subfigure}
  \caption{Comparison of optimizer cumulative payoff under different advice instances.}
  \label{fig:comparison-grid}
\end{figure}

We provide an initial empirical evaluation of advice-augmented algorithms in the security game setting with a focus on the \emph{green security game}. Building on work such as \citep{fang2015when}, we leverage a model of interaction between park rangers who are defenders and illegal poachers in a national park. In this scenario, the park rangers (the optimizer) would like to catch illegal poachers (the learner). The poachers would like to find animals (in our case African elephants) without being detected by the park rangers. The park rangers are provided advice in the form of payoff matrix predictions. We provide more details on the setup, explicit strategy constructions and additional evaluations in Appendix \ref{Exp_app}.



Through our experiments, we observe the behavior we hope for. In particular, if the advice is good, the algorithm is able to take advantage of the prediction and get better performance and in the instance that the advice is not as good, the algorithm is able to protect itself and default to a safer option with regret guarantees. It is of note that we show as a benchmark, the trajectories that a no-swap-regret algorithm would have taken if it had only been running as a no-swap-regret algorithm. We also show the cumulative payoff that the advice would have gotten if it had been the only strategy played from the beginning.

\section{CONCLUSION}
\label{sec: conclusion}
This work provides an investigation into the incorporation of machine-learned advice within the repeated game context. We characterize the advice landscape and provide a concrete pseudo-metric to aid in understanding what `good' advice could look like. We illustrate the power that good forms of advice could yield in determining Stackelberg strategies. We show the limitations of advice. We prove that it is impossible to, in general, guarantee both near Stackelberg performance when the advice is correct and some no-regret condition in the cases that the advice is incorrect. Despite this, we find a more natural fallback wherein against certain classes of learners, the optimizer is able to deploy an advice-augmented strategy that weakly dominates the utility from some (coarse)-correlated equilibrium. This, in effect, allows for the optimizer to at least take advantage of the advice in some cases where it is indeed possible for them to do so, whilst guaranteeing a no-regret condition in instances where that is not possible. We provide a numerical illustration from real word data which we cast as a green-security game. Altogether our findings survey the landscape of incorporating machine-learned advice within the repeated game setting. 
\newpage

\bibliographystyle{plainnat}  
\bibliography{NeurIPS} 

\newpage
\appendix
\label{appendix}
\newpage
\section*{Appendix}
 \textbf{Limitations and Future Directions} While our work considers the problem only from the optimizer's point of view, we plan to conduct additional analysis looking into the impact of having multiple agents all leveraging advice augmented algorithms. Furthermore, in this work, we make use of the Stackelberg strategy as a desirable goal for the optimizer. While this indeed draws justification from both a practical perspective (i.e., security games) and a theoretical perspective (i.e., $o(T)$ additive factor optimality when faced with a no-swap-regret opponent), there is an opportunity to consider advice benchmarked on optimality notions closer to absolute optimality over all possible trajectories for a given opponent.

\textbf{Broader Impacts} Our work has potential impact in a wide range of fields where strategic decisions are made by players over time in a repeated manner. These domains include sectors such as security, finance, etc., Our results enable the development of safe robust algorithms in these settings particularly as machine learning becomes a bigger part of how decisions are made in these settings. There are however potential negative impacts specifically arising from the improper development and misuse of prediction models and as a consequence the algorithms. Future research could therefore be aimed at understanding how to mitigate such risks. 

\section{Proof details}

\subsection{Proof of Proposition \ref{prop:pseudo_metric}}
\begin{proof}
    We proceed to show this by verification of the properties of a pseudo-metric:
    \begin{itemize}[nosep,leftmargin=*]
        \item For any $\mathbf{B}_{pred}$ we have that $\Gamma(\mathbf{B}_{pred}, \mathbf{B}_{pred}) = | u_A(\mathcal{S}(\mathbf{B}_{pred}), BR_B(\mathcal{S}(\mathbf{B}_{pred}))) - u_A(\mathcal{S}(\mathbf{B}_{pred}), BR_B(\mathcal{S}(\mathbf{B}_{pred})))| = 0$
        \item For any $\mathbf{B}_{pred}, \mathbf{B}'_{pred}$ we have that $\Gamma(\mathbf{B}_{pred},\mathbf{B}'_{pred}) = | u_A(\mathcal{S}(\mathbf{B}_{pred}), BR_B(\mathcal{S}(\mathbf{B}_{pred}))) - u_A(\mathcal{S}(\mathbf{B}'_{pred}), BR_B(\mathcal{S}(\mathbf{B}'_{pred}))) | = |  u_A(\mathcal{S}(\mathbf{B}'_{pred}), BR_B(\mathcal{S}(\mathbf{B}'_{pred}))) - u_A(\mathcal{S}(\mathbf{B}_{pred}), BR_B(\mathcal{S}(\mathbf{B}_{pred}))) | = \Gamma(\mathbf{B}'_{pred},\mathbf{B}_{pred}) $. Thus illustrating symmetricity.
        \item For any $\mathbf{B}'_{pred}, \mathbf{B}_{pred}, \bar{\mathbf{B}}_{pred}$ consider $u_A(\mathcal{S}(\mathbf{B}_{pred}), BR_B(\mathcal{S}(\mathbf{B}_{pred}))),$ \\ $ u_A(\mathcal{S}(\mathbf{B}'_{pred}), BR_B(\mathcal{S}(\mathbf{B}'_{pred}))),$ and $ u_A(\mathcal{S}(\bar{\mathbf{B}}_{pred}), BR_B(\mathcal{S}(\bar{\mathbf{B}}_{pred})))$. Note that these are three points on the real line and from the satisfaction of the triangle inequality on the real line, we can then derive the triangle inequality property for $\Gamma$ for any instances of $\mathbf{B}'_{pred}, \mathbf{B}_{pred}, \bar{\mathbf{B}}_{pred}$ . 
    \end{itemize}
\end{proof}

\subsection{Proof of Proposition \ref{proposition_2x2}}
\begin{proof}
     Let $\mathcal{S}(\mathbf{B}) = \begin{bmatrix}
p^\star \\
1-p^\star
\end{bmatrix} $ and  $\mathcal{S}(\mathbf{B}_{pred}) = \begin{bmatrix}
p \\
1-p
\end{bmatrix}$. Note that $ \| \begin{bmatrix}
p^\star \\
1-p^\star
\end{bmatrix} - \begin{bmatrix}
p \\
1-p
\end{bmatrix}\|_1 = 2|p - p^\star|$. The condition $\Gamma(\mathbf{B, B}_{pred}) < V - \kappa(\delta)$ thus implies that $|p - p^\star| < \frac{1}{2}\delta$. Without loss of generality, assume that $[p^\star - \frac{1}{2}\delta , p^\star +\frac{1}{2}\delta] \subset (0, 1)$. Let $R_1 = \{p \in [0,1]: BR_B( \begin{bmatrix}
p \\
1-p
\end{bmatrix}) =  \begin{bmatrix}
0 \\
1
\end{bmatrix} \}$ and $R_2 = \{p \in [0,1]: BR_B( \begin{bmatrix}
p \\
1-p
\end{bmatrix}) =  \begin{bmatrix}
1 \\
0
\end{bmatrix} \}$. Since the Stackelberg strategy is $\begin{bmatrix}
p^\star \\
1-p^\star
\end{bmatrix}$, realize that $p^\star + \epsilon$ and $p^\star - \epsilon$ are in different regions $\forall \epsilon$ (i.e., if $p^\star + \epsilon \in R_1$ then $p^\star - \epsilon \in R_2$). We can then determine if $|p - p^\star| \leq \frac{1}{2}\delta$ by checking if $p + \frac{1}{2}\delta$ and  $p - \frac{1}{2}\delta$ are in the same region. \\

Since the learner is $\phi$-no-adaptive regret learner with $\phi \in o(T)$, we note that the optimizer can employ queries by playing the same strategy for an $o(T)$ length of time. The $\phi$-no-adaptive regret condition ensures convergence to a best response strategy within $o(T)$ rounds. These two facts provide a nice characterization of an advice augmented algorithm for the optimizer. The optimizer simply queries $\begin{bmatrix}
p + \frac{1}{2}\delta \\
1-p - \frac{1}{2}\delta
\end{bmatrix}$, $\begin{bmatrix}
p \\
1-p
\end{bmatrix}$   and $\begin{bmatrix}
p - \frac{1}{2}\delta \\
1-p + \frac{1}{2}\delta
\end{bmatrix}$. If the best responses to these queries are different, the optimizer then plays the strategy with the most utility else they play a regular no-regret strategy (e.g., Multiplicative Weights).

In the case that $\Gamma(\mathbf{B, B}_{pred}) < V - \kappa(\delta)$, the optimizer does not attain the approximate Stackelberg value for at most 4$\times o(T)$ rounds from three query rounds and an additional round from having selected the strategy to play but waiting for the algorithm to best respond. If the advice is beyond the accuracy threshold, the optimizer simply incurs an additional $3 \times o(T)$ regret from the `queries' used in the verification part of the algorithm. This means $ALG$ would still retain the no-regret property of the out of the box no-regret algorithm it leverages.
 \end{proof}

\subsection{Proof of Proposition \ref{prop:same_count}}
\begin{proof}

Let the best response regions, $R_b$, be sets in $\Delta(\mathcal{A})$ such that $\forall \alpha \in R_b, b = BR_B(\alpha)$. We denote the collection of these sets to be $\mathbf{R}$. Furthermore, we note that these regions are polytopes separated by hyperplanes. 

For some games, it is the case that to determine the location of the hyperplanes separating the best response regions, the optimizer needs $\Omega(\phi |\mathcal{A}||\mathcal{B}|)$ interactions. In particular consider the set of games $S$ where $|\mathbf{R}| = |\mathcal{B}|$. We get this from $\phi$ interactions to get a best response for each query from the learner, $\Omega(|\mathcal{B}|)$ hyperplanes to separate each of the best response regions, and then $\Omega(|\mathcal{A}|)$ to determine each hyperplane in the $|\mathcal{A}|-1$ dimensional simplex. 

Assume toward a contradiction that an advice-augmented algorithm $ALG$ confirmed some strategy $\bar{\alpha}$ as being a $\delta-$approximate Stackelberg strategy without determining all approximate hyperplanes. We can easily construct another game $G'$ such that the Stackelberg strategy lies on a hyperplane $H$ which is undetermined by $ALG$. This then makes it possible to have $G'$ such that the strategy $\bar{\alpha}$
is not a $\delta-$approximate Stackelberg strategy. For $G'$, given the same advice and the same interactions, this confirmation of $\bar{a}$ is incorrect.
\end{proof}

\subsection{Proof of Theorem \ref{thm:efficient_computation}}
\begin{proof}
    We note that the optimizer can always generate histories that query the simulator's best response for a particular fixed strategy. This makes the simulator, in essence, an approximation of a best response oracle. To put this concretely, to implement a query, the optimizer generates a transcript $\mathcal{O}(\poly(1/\varepsilon))$ plays of a fixed strategy $q$. The no-anytime-regret guarantee of $\mathcal{P}_B$ implies that the optimizer gets the fixed best response to the query $q$. They can then repeat this, starting from the first action. This is possible as the optimizer has access to the simulator's functions and can always start from the beginning again. This thus fashions $\mathcal{P}_B$ to be a form of a best response oracle. The no-anytime-regret property combined with the performance guarantee on the simulator means that they can identify an $\varepsilon-$approximation of the $\mathcal{P}_B$'s Stackelberg strategy. Accounting for the potential error in $\mathcal{P}_B$, we thus have a $\delta + \varepsilon-$approximate Stackelberg strategy. We have from \cite{Letchford2009LearningAA} that the optimizer needs $Q = \poly(|\mathcal{A}|, |\mathcal{B}|, 1/\varepsilon)$ queries which then gives us the total number of queries required $Q \cdot \poly(1/\varepsilon) = \poly(|\mathcal{A}|, |\mathcal{B}|, 1/\varepsilon)$.
\end{proof}

\subsection{Proof of Theorem \ref{thm:impossibility}}
\begin{proof}
    The main idea behind this proof is that even with correct advice, a no-regret adversary can pretend to have a completely different payoff matrix and induce the optimizer to abandon playing a (approximate) Stackelberg strategy. Throughout this proof, we assume that the learner maintains a no-swap-regret condition. We note that the same analysis will go through for a no-external-regret learner. 
    
    We create an adversarial instance of player $B$ whose goal is to convince player $A$ not to play an approximate Stackelberg equilibrium. Let the Stackelberg equilibrium be $(\alpha^\star, b^\star)$ for some fixed $b^\star \in \mathcal{B}$. Let $R^\star := \{\alpha \in \Delta({\mathcal{A}}) \text{ such that } BR_B(\alpha) = b^\star \}$. $R^\star$ is essentially the facet that induces player $B$ to play the action $b^\star$ as a best response. To convince the player not to play an approximate Stackelberg strategy, Player $B$ will `punish' player $A$ whenever they play a strategy that would force them to best respond with $b^\star$. Player $B$, however, otherwise plays some regular no-swap-regret strategy in order to maintain a no-swap-regret property.
    
    To implement this strategy Player $B$ makes use of the set $\{ \alpha_{\tau}: \alpha_\tau \in R^\star , \tau < t\}$. They keep track of the size of this set, and if at any point it exceeds a threshold, they play some action that forces player $A$ to enact strategies outside of region $R^\star$. Player $B$ makes sure that the Player $A$ has played strategies in $R^\star$ sufficiently many times to ensure that they are aware of convergence to strategies within this space and that they meaningfully incur regret for some convergence into the strategy space $R^\star$.
    
    We make this concrete with the following algorithmic description. Assume player $B$ has access to a no-swap-regret algorithm $B_{NSR}$. The player then plays the following strategy at each time $t$:

\begin{algorithm}[!ht]
    \KwIn{Sequence $\{ (\alpha_\tau, \beta_\tau) \}_{\tau = 1}^{t-1}$}
    \If(\tcp*[f]{for some $g(T) \in o(T) \cap \omega(\log T)$}){$c \leq \left| \left\{ \alpha_{\tau} : \alpha_\tau \in R^\star ,\ \tau < t \right\} \right| < g(T)$}{
        $\beta_t \gets b'$ \; 
    }
    \Else{
        $\beta_t \gets B_{NSR}(\{ (\alpha_\tau, \beta_\tau) \}_{\tau = 1}^{t-1})$ \;
    }
    \caption{Player $B$'s Decision Rule for $\beta_t$}
    \label{Alg_ref1}
\end{algorithm}

We show that for any optimizer (player A) that attempts to retain a no-regret guarantee in the case of the advice being incorrect, there exists an instance of Algorithm \ref{Alg_ref1} which forces the optimizer to always be no-regret even if the advice is perfectly correct. As such assume $\Gamma(\mathbf{B}_{pred}, \mathbf{B}) = 0$. We consider a game such that $BR_A(b') \notin R^\star$. Firstly, we note that the Algorithm \ref{Alg_ref1} for player $B$ is no-swap-regret. In the worst case, it adds an additional $o(T)$ factor of regret. This coupled with the underlying no-swap-regret guarantee of $B_{NSR}$ means that the algorithm is inherits a no-swap-regret guarantee. 
    
    To illustrate why the optimizer can be forced to always be no-regret, consider the optimizer's problem. The optimizer has the challenge of discerning whether the responses they get from Player $B$ at any instant indeed are accurate and reflect some true payoff matrix. In the context of Algorithm \ref{Alg_ref1}, the optimizer has to decide whether when they see player $B$ play the strategy $b'$ if it indeed is a best response strategy. We argue that there will always exist some no-swap-regret learner who can continue playing the strategy $b'$ for long enough that Player $A$ defaults to simply playing a no-swap-regret strategy themselves. 
    
    Concretely, consider some other version of a payoff matrix for the learner, $\mathbf{B}'$, such that $\forall \alpha \in R^\star, BR_{\mathbf{B}'}(\alpha) = b'$ with $BR_A(b') \notin R^\star$ (we note that $BR_{\mathbf{B}'}$ is a slight abuse of notation to signify the best response according to Player $B$ assuming their payoff matrix is $\mathbf{B}'$). This is to say that it indeed is the case that the best response to the optimizer playing a strategy in $R^\star$ is $b'$. The optimizer has to in $o(T)$ time either commit to the advice's proposed strategy without full knowledge of whether the learner's payoff matrix is $\mathbf{B}'$ or $\mathbf{B}_{pred}$ or default to a no-swap-regret strategy and guarantee no-swap-regret. Realize that as a consequence we have the magnitude of Player $A$'s historic plays in region $R^\star$ as being either $\in o(T)$ or $\notin o(T)$.

    \textbf{Case 1}: $ |\{ \alpha_{\tau} : \alpha_\tau \in R^\star , \tau < T\}| \in o(T\})$:\\
    In this instance, the optimizer straightforwardly violates the first condition as they do not take advantage of the advice despite it being accurate.

    \textbf{Case 2}: $ |\{ \alpha_{\tau} : \alpha_\tau \in R^\star , \tau < T\}| \notin o(T)\}$:\\
    This would mean that for a version of the game where $\mathbf{B}'$ was the true payoff matrix for the learner, the optimizer would not retain the no-regret property.
\end{proof}
    
\subsection{Proof of Theorem \ref{thm:no-regret}}
\begin{proof}
Consider Algorithm \ref{Alg_nr_dominate}. We assume that the optimizer also has access to a $\phi$-no-adaptive-external regret learner incurring at most $\varepsilon$ regret in each $\phi$ window, $A_{NR}$. There are two cases to analyze here:

\textbf{Case 1:} Algorithm \ref{Alg_nr_dominate} in Phase $4$ commits to $\alpha^*_{advice}$:\\
Note that this only happens after a phase in which the total utility induced by $\alpha^*_{advice}$ is larger than one induced by playing $A_{NR}$. Note that since both players are playing no-regret strategies guaranteed to have at most regret $\varepsilon$, we get convergence to $\varepsilon$-approximate coarse correlated equilibria. As such we have that the utility we are getting from $\alpha^*_{advice}$ dominates what we would have gotten from a $\varepsilon$-approximate coarse correlated equilibrium, in particular, the one the optimizer compared against. 
    
\textbf{Case 2:} Algorithm \ref{Alg_nr_dominate} in Phase $4$ commits to $A_{NR}$:\\
This happens after having played $A_{NR}$ and having witnessed that the $\varepsilon-$approximate coarse correlated equilibrium has higher utility compared to the advice. The confirmation step here is essential as after having played some other strategy for a while, the optimizer in Phase 3 confirms that they are able to get on a trajectory that still is better than having followed the advice. If having played the advice previously alters the trajectory and they find themselves in some other coarse correlated equilibrium that is worse than the utility they get from the advice they can still opt out. However, for them to commit to $A_{NR}$, it must be the case that the utility they have received from the $\varepsilon$-approximate coarse correlated equilibrium is better than what they got from their advice.

\begin{algorithm}

\DontPrintSemicolon
\KwIn{Advice strategy $\alpha^*_{\text{advice}}$, $\phi$-no-regret learner $A_{NR}$, phase length $\phi$, Tolerance $\varepsilon$}
\textbf{Initialization:} {$\texttt{phase} \leftarrow 0$, $R_{\text{NR}} \leftarrow 0$, $R_{\text{advice}} \leftarrow 0$}

\For{each round $t$}{
  \uIf{$\texttt{phase} = 0$ (Explore no-regret)}{
    Play action $\alpha_t \sim A_{NR}$ \\
    $R_{\text{NR}} \leftarrow R_{\text{NR}} + u_A(\alpha_t, \beta_t)$ \\
    After $\phi$ rounds, switch to phase 1
  }

  \uElseIf{$\texttt{phase} = 1$ (Explore advice)}{
    Play action $\alpha^*_{\text{advice}}$ \\
    $R_{\text{advice}} \leftarrow R_{\text{advice}} + u_A(\alpha^*_{\text{advice}}, \beta_t)$ \\
    After $\phi$ rounds, switch to phase 2
  }

  \uElseIf{$\texttt{phase} = 2$ (Compare)}{
    \uIf{$R_{\text{NR}} + \phi\cdot\varepsilon < R_{\text{advice}}$}{
      Commit to $\alpha^*_{\text{advice}}$ (phase 4)
    }\Else{
      Reset $R_{\text{NR}} \leftarrow 0$ and switch to phase 3 (Confirm no-regret)
    }
  }

  \uElseIf{$\texttt{phase} = 3$ (Confirm no-regret)}{
    Play $\alpha_t \sim A_{NR}$\\
    $R_{\text{NR}} \leftarrow R_{\text{NR}} + u_A(\alpha_t, \beta_t)$ \\
    After $\phi$ rounds: \\
    \Indp
      \uIf{$R_{\text{NR}} + \phi\cdot\varepsilon  < R_{\text{advice}}$}{
        Commit to $\alpha^*_{\text{advice}}$ (phase 4)
      }\Else{
        Commit to $A_{NR}$  (phase 4)
      }
    \Indm
  }

  \ElseIf{$\texttt{phase} = 4$ (Commit)}{
    \uIf{confirmed no-regret}{
      Play $\alpha_t \sim A_{NR}$, update as usual
    }\Else{
      Play $\alpha^*_{\text{advice}}$
    }
  }
}
\caption{Player A Meta-Strategy with Cumulative Utility Comparison}
\label{Alg_nr_dominate}
\end{algorithm}
    
\end{proof}
\begin{corollary}
\label{cor:no-regret-ext}
    There exists an advice-augmented algorithm for the optimizer for which when they engage with a $\phi$-no-adaptive-swap regret learner incurring at most $\varepsilon$ swap regret in each $\phi$ window, the optimizer's total expected utility weakly dominates utility from some $\varepsilon$-approximate correlated equilibrium.
\end{corollary}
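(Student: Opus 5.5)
The plan is to reuse Algorithm \ref{Alg_nr_dominate} essentially verbatim, changing only the base learner and the benchmark equilibrium notion. Let $A_{NSR}$ be a $\phi$-no-adaptive-swap-regret algorithm for the optimizer that incurs at most $\varepsilon$ swap regret in every window of length at least $\phi$. Such an algorithm can be built from a $\phi$-no-adaptive-external-regret algorithm by the standard external-to-swap reduction \citep{blum2007external, cesa2006prediction}. We run the same explore/compare/confirm/commit phases, with $A_{NSR}$ in place of $A_{NR}$. The thresholds stay $R_{NR}+\phi\varepsilon<R_{advice}$, since the comparison only uses cumulative utilities over windows of length $\phi$.

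The first key step is the equilibrium fact: over any window of length $\phi$ in which both players run algorithms with swap regret at most $\varepsilon$, the empirical joint distribution of play $\frac{1}{\phi}\sum_t \alpha_t\otimes\beta_t$ is an $\varepsilon$-approximate correlated equilibrium. This follows directly from the definition of swap regret, applied to each player's swap functions on that window; the normalisation convention for $\varepsilon$ should be matched to the one used in Theorem \ref{thm:no-regret}. The accumulated $R_{NR}$ in phase 0 or phase 3 is therefore $\phi$ times the optimizer's utility at some $\varepsilon$-approximate CE. Here "some" must be read as the CE realised on that window. This is exactly the same reading as in Theorem \ref{thm:no-regret}, where the coarse version was used.

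Then I redo the two-case analysis of Theorem \ref{thm:no-regret}. \textbf{Case 1} (commit to advice): commitment happens only when $R_{advice}>R_{NR}+\phi\varepsilon$, so the advice window strictly beats the CE-window benchmark by the slack. For the remaining rounds, the learner's adaptive swap-regret guarantee on every later $\phi$-window means that its response to the fixed $\alpha^*_{advice}$ stays within $\varepsilon$ per round of the best-response value observed in phase 1. The slack $\phi\varepsilon$ is there to absorb exactly this drift, so per-window utility stays at least the CE benchmark. \textbf{Case 2} (commit to $A_{NSR}$): every later $\phi$-window is again played by two $\varepsilon$-swap-regret algorithms. Each such window's empirical distribution is therefore an $\varepsilon$-approximate CE, and the confirm phase guarantees the committed trajectory is at least as good as the advice. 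Summing over windows, with the $O(\phi)$ exploration rounds contributing lower order, gives weak domination of utility at some $\varepsilon$-approximate CE.

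The main obstacle is Case 1. There I need the optimizer's fixed play against an adaptive-swap-regret learner to continue yielding utility comparable to phase 1, even though the learner's later window play may differ. My first attempt is to note that against a fixed $\alpha$, low swap regret on each window forces mass on $\varepsilon$-best responses to $\alpha$. Combined with Assumption \ref{BR_assumption}, that pins down the optimizer's per-window utility up to an $O(\varepsilon)$ term absorbed by the $\phi\varepsilon$ slack. A secondary subtlety is the one flagged after Theorem \ref{thm:no-regret}: the benchmark CE may shift after exploring the advice. This is handled, as there, by the phase-3 confirmation comparing against a fresh window.
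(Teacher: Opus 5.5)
Your skeleton is the paper's own: its proof of the corollary is just the remark that one reruns the proof of Theorem~\ref{thm:no-regret} with no-swap-regret algorithms. Your observation that $\frac{1}{\phi}\sum_t\alpha_t\otimes\beta_t$ is an $\varepsilon$-approximate CE on any window where both players have swap regret at most $\varepsilon$ is correct, and more explicit than anything the paper writes down. The gap is the step you yourself flag, Case~1, and your patch for it does not work. Against a fixed $\alpha$, small swap regret bounds only the \emph{learner's} loss $\sum_b\beta(b)\,[\max_{b'}u_B(\alpha,b')-u_B(\alpha,b)]$. It says nothing about $u_A(\alpha,\beta)$ across (nearly) tied responses. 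Assumption~\ref{BR_assumption} is a volume condition on best-response regions and gives no margin at the particular point $\alpha^*_{advice}$. When the advice is correct, that point typically sits on a region boundary, where the tie is exact. Even with a margin $\gamma$, you would only get mass $O(\varepsilon/\gamma)$ on suboptimal responses. That is a per-round drift of order $\varepsilon\max_{i,j}|\mathbf{A}_{i,j}|/\gamma$, which the slack $\phi\varepsilon$ is not calibrated to absorb.

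The step is in fact false for Algorithm~\ref{Alg_nr_dominate}. Take $\mathbf{A}=\begin{bmatrix}0&0\\10&-10\end{bmatrix}$ and $\mathbf{B}=\begin{bmatrix}1&0\\0&1\end{bmatrix}$, which satisfies Assumption~\ref{BR_assumption}.
\begin{itemize}
\item The Stackelberg strategy is $(\tfrac12,\tfrac12)$ with value $5$, and the learner is exactly indifferent there.
\item Every $\varepsilon$-approximate CE gives the optimizer between $-\varepsilon$ and $21\varepsilon$ per round.
\item With exactly correct advice, consider a learner that plays $b_1$ during the advice window and $b_2$ afterwards, as long as the optimizer keeps playing $(\tfrac12,\tfrac12)$. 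It incurs no swap regret on those rounds, since both actions are exact best responses.
\item For small $\varepsilon$ this triggers the commit ($5\phi>R_{\text{NR}}+\phi\varepsilon$). The optimizer is then held at $-5$ per round, so for $T\gg\phi$ its total falls below $T$ times the value of every $\varepsilon$-approximate CE.
\end{itemize}
The paper's Case~1 in Theorem~\ref{thm:no-regret} simply assumes the phase-1 comparison persists after commitment, so it has the same hole. Your proposal is therefore no less complete than the paper's, but the justification you add is incorrect.

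For the corollary as stated, which asks only for weak domination, you can avoid Case~1 altogether. Run $A_{NSR}$ for all $T$ rounds, i.e.\ an advice-augmented algorithm that ignores its advice. Since $[1,T]$ is itself a window of length at least $\phi$, the whole-horizon empirical distribution is an $\varepsilon$-approximate CE, and the optimizer attains its value exactly. The advice only matters for strict domination, which the corollary does not claim.

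If you want to keep using the advice, continue monitoring each $\phi$-block after committing. Fall back permanently to $A_{NSR}$ the first time a block's utility drops below $R_{\text{NR}}$. The set of $\varepsilon$-approximate CE is convex, so you can merge the benchmark CE and the post-fallback CE into a single one. Your total utility then weakly dominates that CE up to an $O(\phi)$ additive term.
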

\begin{proof}
    We can make use of the same proof as~\Cref{thm:no-regret} noting that we make use of no-swap-regret algorithms instead of no-regret algorithms.
 \end{proof}

\section{Details on Numerical Results}
\label{Exp_app}
In our evaluations, the park ranger takes the perspective of the optimizer and as such they seek to identify the Stackelberg strategy. The illegal poachers are constrained to have a no-swap-regret guarantee. We provide as advice the true Stackelberg strategy for the park rangers, a suboptimal strategy, and evaluate how the cumulative payoff and the probability with which the optimizer plays the advice or a no-swap-regret strategy evolves with time. 
We provide plots for the evolution of the cumulative payoff of the Park Ranger (optimizer) as they deploy the advice augmented strategy described in Theorem \ref{thm:no-regret}.

For this evaluation we use real-world data sourced from \cite{Movebank} which tracks the migratory patterns of 32 African elephants. We isolate one season of African elephant population density within this dataset (April - September 2011) and from there construct a 3 $\times$ 3 grid with the same number of elephants belonging to some grid. 

We define a repeated two-player game between the park ranger and poacher wherein each player selects a mixed strategy at each time step (a distribution over cells in the 3 $\times$ 3 grid world). 

Let $n$ be the number of elephants in a given grid cell, the payoff structure is defined as follows:

\begin{itemize}
    \item If the attacker and defender select the same cell, the defender receives $n$ (for successfully protecting elephants) while the defender receives $-2$ (a fixed legal penalty)
    \item If the attacker and defender select different cells, the attacker selects $n_{att}$(the number of elephants in the cell selected by the attacker) and the defender receives $-n_{att}$(the number of elephants attacked)
\end{itemize}

In our evaluations, we leverage payoffs defined by the following:
\begin{center}
\textbf{Distribution of Elephants}

\vspace{0.5em}
\renewcommand{\arraystretch}{1.5}
\begin{tabular}{|c|c|c|}
\hline
0 & 0 & 0 \\
\hline
0 & 1 & 1 \\
\hline
0 & 0 & 6 \\
\hline
\end{tabular}
\end{center}

\begin{table}[h!]
\centering
\begin{tabular}{cc}
\begin{minipage}{0.45\textwidth}
\centering
\textbf{Park Ranger Payoffs (Defender)}

\[
\begin{bmatrix}
  0 &  0 &  0 &  0 & -1 & -1 &  0 &  0 & -6 \\
  0 &  0 &  0 &  0 & -1 & -1 &  0 &  0 & -6 \\
  0 &  0 &  0 &  0 & -1 & -1 &  0 &  0 & -6 \\
  0 &  0 &  0 &  0 & -1 & -1 &  0 &  0 & -6 \\
  0 &  0 &  0 &  0 &  1 & -1 &  0 &  0 & -6 \\
  0 &  0 &  0 &  0 & -1 &  1 &  0 &  0 & -6 \\
  0 &  0 &  0 &  0 & -1 & -1 &  0 &  0 & -6 \\
  0 &  0 &  0 &  0 & -1 & -1 &  0 &  0 & -6 \\
  0 &  0 &  0 &  0 & -1 & -1 &  0 &  0 &  6 
\end{bmatrix}
\]
\end{minipage}
\end{tabular}
\end{table}

\begin{table}[h!]
\centering
\begin{tabular}{cc}
\begin{minipage}{0.45\textwidth}
\centering
\textbf{Illegal Poacher Payoffs (Attacker)}

\[
\begin{bmatrix}
 -2 &  0 &  0 &  0 &  1 &  1 &  0 &  0 &  6 \\
  0 & -2 &  0 &  0 &  1 &  1 &  0 &  0 &  6 \\
  0 &  0 & -2 &  0 &  1 &  1 &  0 &  0 &  6 \\
  0 &  0 &  0 & -2 &  1 &  1 &  0 &  0 &  6 \\
  0 &  0 &  0 &  0 & -2 &  1 &  0 &  0 &  6 \\
  0 &  0 &  0 &  0 &  1 & -2 &  0 &  0 &  6 \\
  0 &  0 &  0 &  0 &  1 &  1 & -2 &  0 &  6 \\
  0 &  0 &  0 &  0 &  1 &  1 &  0 & -2 &  6 \\
  0 &  0 &  0 &  0 &  1 &  1 &  0 &  0 & -2 
\end{bmatrix}
\]
\end{minipage}
\end{tabular}
\end{table}
Concretely here are the strategies we provide the optimizer:\\
\begin{align*}
&\textbf{True Stackelberg Strategy:}\\ 
&\begin{bmatrix}
0 & 0 & 0 & 0 & \frac{3}{19} & \frac{3}{19} & 0 & 0 & \frac{13}{19}
\end{bmatrix} \\
&\textbf{Incorrect Advice Stackelberg Strategy:} \\
&\begin{bmatrix}
0 & 0 & 0 & 0 & 0 & 0 & 0 & 0 & 1
\end{bmatrix}
\end{align*}

We also include for completeness additional evaluations for advice modes that extend beyond just an initial fixed strategy. In particular we also assume the following advice modes:
\begin{enumerate}
    \item Advice is sampled at random from the simplex. We sample $\mathbf{z} \sim \mathcal{N}(\mathbf{0}, I_n)$ and set $\alpha_i = e^{z_i}/\sum_{k=1}^n e^{z_k}$.
    \item Advice starts out as the true Stackelberg strategy then arbitrarily switches to the incorrect advice Stackelberg strategy.
    \item Advice starts out as the incorrect advice Stackelberg strategy then arbitrarily switches to the correct advice strategy.
\end{enumerate}

Note that in all these cases the advice is not fixed yet the construction of Algorithm~\ref{Alg_nr_dominate} takes in a fixed advice. To resolve this, we allow the player to build fixed advice from limited interaction with the advice mode. We allow this limited interaction to occur online and from there the agent essentially builds up an empirical estimate of fixed advice to use for the rest of the game. 

\begin{figure}[h]
  \centering
  \begin{subfigure}[t]{0.45\textwidth}
    \centering
    \includegraphics[width=\textwidth]{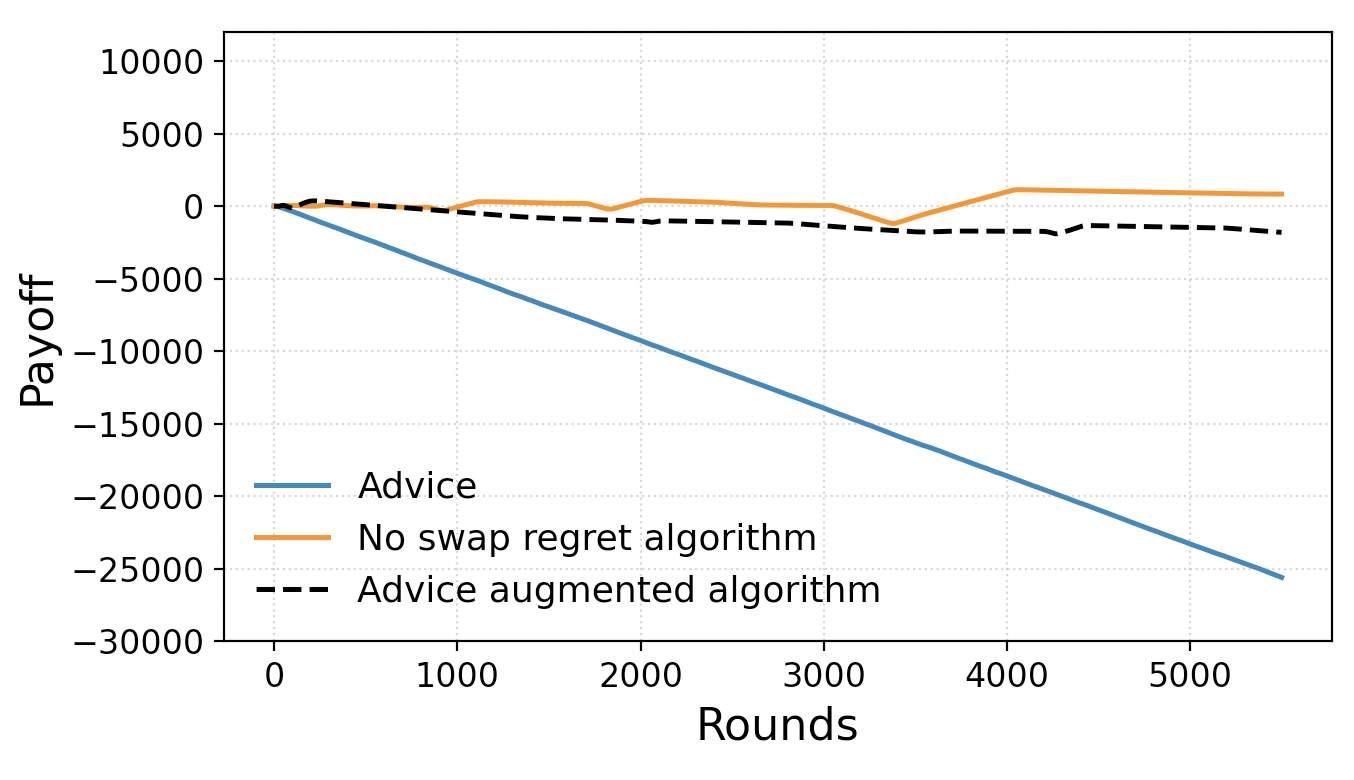}
    \caption{Optimizer cumulative payoff given Advice mode 1}
  \end{subfigure}
  \hfill
  \begin{subfigure}[t]{0.45\textwidth}
    \centering
    \includegraphics[width=\textwidth]{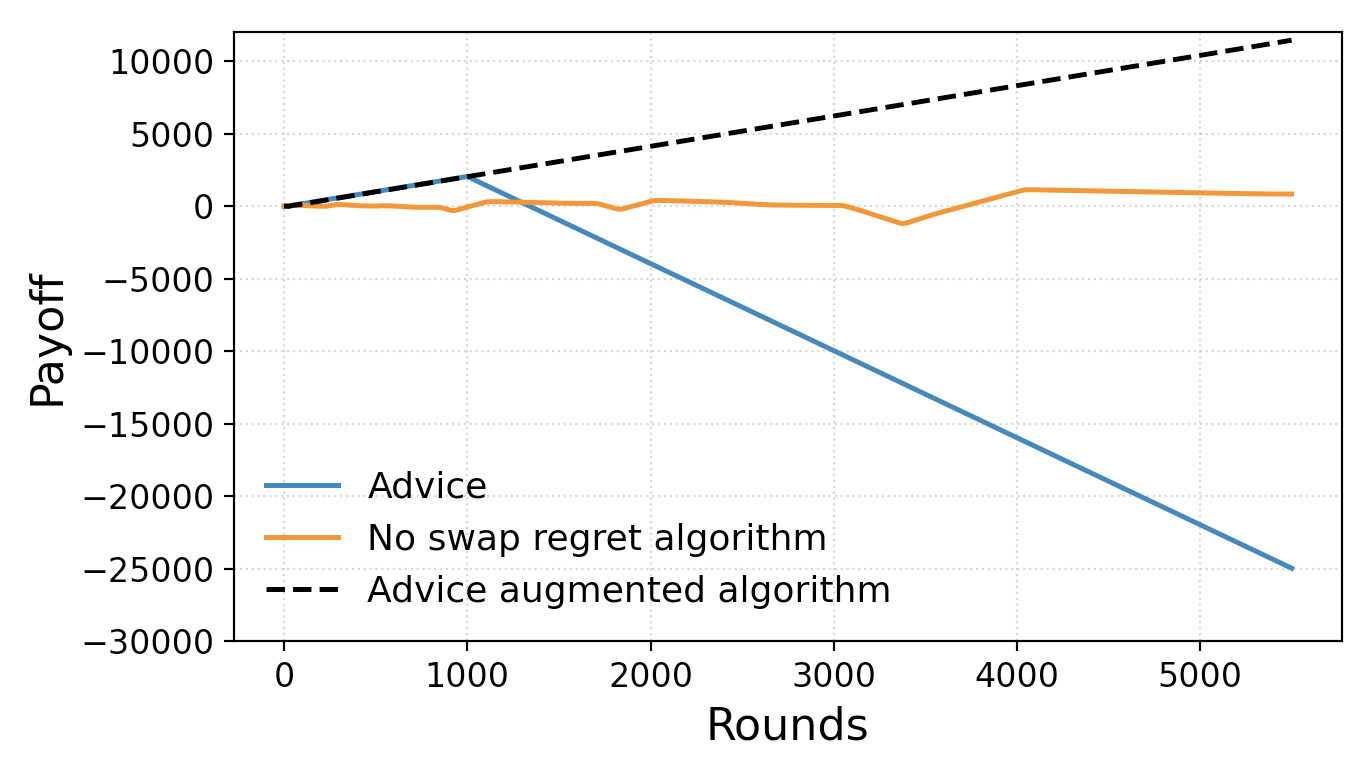}
    \caption{Optimizer cumulative payoff given Advice mode 2}
  \end{subfigure}
    \hfill
  \begin{subfigure}[t]{0.45\textwidth}
    \centering
    \includegraphics[width=\textwidth]{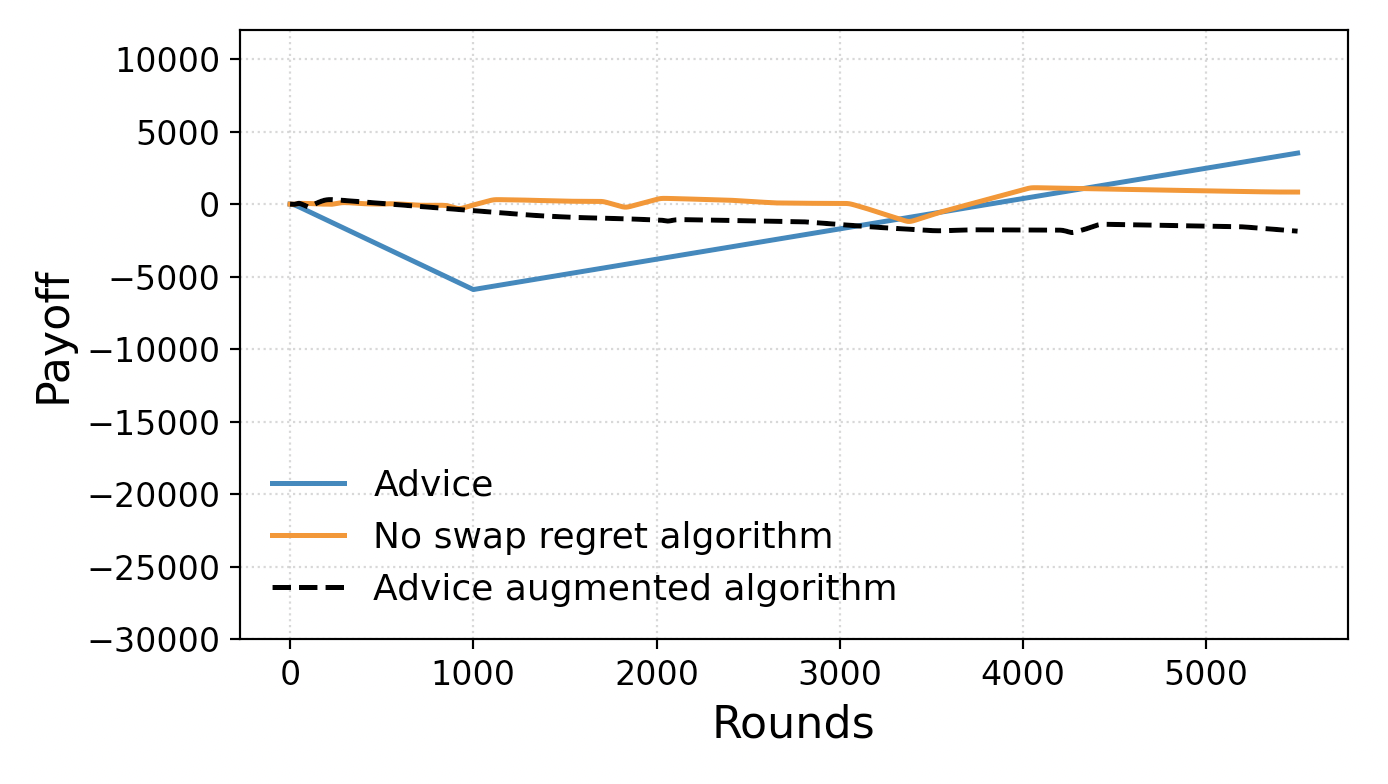}
    \caption{Optimizer cumulative payoff given Advice mode 3}
  \end{subfigure}
  \caption{Comparison of optimizer cumulative payoff under different advice instances.}
  \label{fig:additional_plots}
\end{figure}


\end{document}